\newtheorem{theorem}{Theorem}
\def\BibTeX{{\rm B\kern-.05em{\sc i\kern-.025em b}\kern-.08em
    T\kern-.1667em\lower.7ex\hbox{E}\kern-.125emX}}
\begin{document}

\title{A Theoretical Framework for Rate-Distortion Limits in Learned Image Compression\\

  \thanks{This work was partly supported by the National Natural Science Foundation of China under Grants 92467301, 62293481 and 62471054.}%
}

\author{
\IEEEauthorblockN{Changshuo Wang$^{1}$,~Zijian Liang$^{1}$,~Kai Niu$^{1}$,~Ping Zhang$^{2}$}
\IEEEauthorblockA{$^{1}$\textit{Key Laboratory of Universal Wireless Communications, Ministry of Education}, \\\textit{Beijing University of Posts and Telecommunications}, Beijing, China}
\IEEEauthorblockA{$^{2}$\textit{State Key Laboratory of Networking and Switching Technology},\\ \textit{Beijing University of Posts and Telecommunications}, Beijing, China}
\IEEEauthorblockA{Email: \{changshuo\_wang,~liang1060279345,~niukai,~pzhang\}@bupt.edu.cn}
}

\maketitle

\begin{abstract}
We present a novel systematic theoretical framework to analyze the rate-distortion (R-D) limits of learned image compression. While recent neural codecs have achieved remarkable empirical results, their distance from the information-theoretic limit remains unclear. Our work addresses this gap by decomposing the R-D performance loss into three key components: variance estimation, quantization strategy, and context modeling. First, we derive the optimal latent variance as the second moment under a Gaussian assumption, providing a principled alternative to hyperprior-based estimation. Second, we quantify the gap between uniform quantization and the Gaussian test channel derived from the reverse water-filling theorem. Third, we extend our framework to include context modeling, and demonstrate that accurate mean prediction yields substantial entropy reduction. Unlike prior R-D estimators, our method provides a structurally interpretable perspective that aligns with real compression modules and enables fine-grained analysis. Through joint simulation and end-to-end training, we derive a tight and actionable approximation of the theoretical R-D limits, offering new insights into the design of more efficient learned compression systems.
\end{abstract}

\begin{IEEEkeywords}
Rate-distortion limits, learned image compression, information theory, reverse water-filling, context modeling
\end{IEEEkeywords}

\section{Introduction}
In recent years, with the rapid development of deep learning, neural networks have played a significant role in the field of lossy image compression. End-to-end image compression methods~\cite{balle2016end,balle2017end,theis2017lossy} based on deep learning have outperformed traditional hand-crafted codecs. A milestone in this line of work is the variational end-to-end compression framework proposed by Ballé \emph{et al.}~\cite{balle2018variational}, which combines latent variable modeling with a hyperprior network, achieving rate distortion performance comparable to classical codecs such as JPEG2000~\cite{christopoulos2000jpeg2000} and BPG~\cite{bellard2015bpg}. Subsequently, Minnen \emph{et al.}~\cite{minnen2018joint} introduced context models by combining autoregressive priors with hierarchical priors in the conditional modeling of latents, further improving the rate-distortion (R-D) efficiency. Since then, learned image compression frameworks~\cite{minnen2020channel,he2021checkerboard,he2022elic} have followed this autoregressive and hierarchical approach to prior modeling, continuously improving performance through more fine-grained design and better utilization of side information.

Despite the impressive empirical performance of these methods, most existing studies primarily focus on engineering optimizations (e.g., network architecture design, quantization strategies), while lacking systematic theoretical lower bound analyses to evaluate their potential optimality. Rate-distortion theory provides the fundamental lower bound on the minimum average bitrate achievable under a given source distribution and distortion metric~\cite{shannon1948mathematical,Shannon1959Fidelity}. For memoryless Gaussian sources, this bound can be computed using the reverse water-filling algorithm~\cite{Cover2006} in information theory. Applying such a theoretical framework to neural image compression allows for quantitative evaluation of the gap between practical models and theoretical optimality, and offers guidance for the design of future models.

This paper builds upon the variational compression framework proposed by Ballé \emph{et al.}~\cite{balle2018variational} (hereafter referred to as \emph{Hyperprior}), and systematically analyzes the gap between its performance and the theoretical R-D limit. This includes quantifying the effects of variance estimation error from the Hyperprior, as well as the inefficiency of quantization methods. Based on this, we propose a novel systematic theoretical framework for analyzing learned image compression and further extend it to account for context modeling, ultimately deriving the theoretical R-D limit under this generalized setting.

The main contributions of this paper are as follows:

\begin{itemize}
\item We introduce a principled theoretical framework based on the Hyperprior to estimate the rate-distortion (R-D) limits of learned image compression, providing analytical tools to study the performance gap and contributing factors between learned neural image compression systems and the theoretical R-D limit. 

\item We decompose the performance gap into three interpretable components: variance estimation, quantization strategy, and context modeling, which are aligned with practical modern learned image compression systems.

\item Through joint simulation and training on real-world datasets, we obtain a tight and actionable approximation to the theoretical R-D limit, enabling diagnostic evaluation of current systems and guiding future architectural improvements.

\end{itemize}

\section{Related Work}
\label{gen_inst}

\subsection{Learnable Paradigms for Image Compression}
Recent advances in learned lossy image compression follow the VAE-based~\cite{kingma2013auto} framework proposed by Ballé \emph{et al.}~\cite{balle2017end}, where an input image $\boldsymbol{x}$ is mapped to a latent representation $\boldsymbol{y}$ via an analysis transform $g_a$, quantized to $\boldsymbol{\hat{y}}$, and reconstructed by a synthesis transform $g_s$. The model is trained by minimizing the rate-distortion objective.
To enable end-to-end training, the quantization is approximated by adding uniform noise $\boldsymbol{o}\sim \mathcal{U}(-0.5,0.5)$.

To improve entropy modeling, Ballé \emph{et al.}~\cite{balle2018variational} introduced a hyperprior network, which estimates spatially-varying entropy parameters (e.g., variance $\sigma^2$) from an auxiliary latent $\boldsymbol{\hat{z}} = Q(h_a(\boldsymbol{y}))$, modeled by a separate network $h_s$. 

Minnen \emph{et al.}~\cite{minnen2018joint} further enhance compression performance by incorporating context models to capture local spatial dependencies in $\boldsymbol{\hat{y}}$, effectively forming an autoregressive model conditioned on past latents. Subsequent works~\cite{minnen2020channel,he2021checkerboard,he2022elic,balle2020nonlinear,cheng2020learned,jiang2023mlic++} have largely followed this design paradigm, leveraging increasingly sophisticated combinations of hierarchical and autoregressive priors to model latent distributions more accurately and achieve improved compression efficiency.
\begin{figure*}[t]
  \centering
  \includegraphics[width=0.8\textwidth]{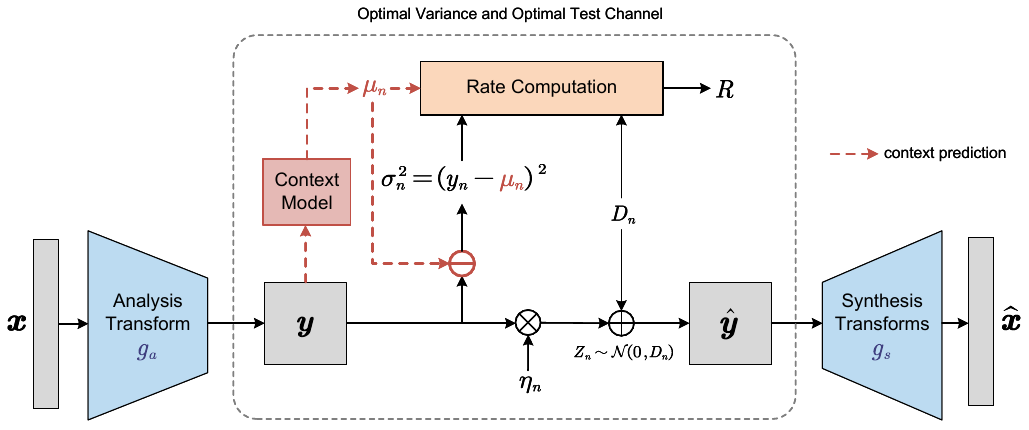}
  \caption{Architecture of the Proposed Rate-Distortion Simulation Framework}
  \label{fig:RWF_overview}
  \vspace{-0.8em}
\end{figure*}

\subsection{Rate-Distortion Estimation}
Rate-distortion theory provides a principled foundation for characterizing the minimum rate $R(D)$ achievable under a distortion constraint $D$. However, computing the rate-distortion function $R(D)$ for natural images is notoriously difficult due to the high-dimensional and continuous nature of real-world data, making the classical Blahut-Arimoto algorithm~\cite{blahut1972computation, arimoto1972algorithm} difficult to apply directly.
Early efforts like Gibson’s work~\cite{gibson2017rate} derived lower bounds using hand-crafted source models, but these are often constrained by simplifying assumptions. More recent approaches estimate $R(D)$ directly from the data. Huang \emph{et al.}~\cite{huang2020evaluating} introduced an AIS-based framework that approximates the $R(D)$ of deep generative models (e.g., VAE~\cite{kingma2013auto}, GAN~\cite{goodfellow2014generative}, AAE~\cite{makhzani2015adversarial}), treating compression as a lossy coding problem and variational upper bounds to measure rate; Sandwich Bound~\cite{yang2022towards} leverages a variational autoencoder to learn latent representations and conditional distributions, yielding paired upper and lower bounds; the NERD ~\cite{lei2022neural} adopts an end-to-end strategy that jointly optimizes an “optimizer-decoder” pair to minimize mutual information; and the Wasserstein gradient descent (WGD) method~\cite{yang2023estimating} reformulates the dual of the rate-distortion problem as a functional optimization task with adversarial regularization.

Despite these advances, most data-driven estimators focus on the global $R(D)$ curve and offer limited insight into which components of neural compression models contribute to inefficiency. In contrast, we take a structurally grounded approach that decomposes performance loss into three key factors: variance estimation, quantization strategy, and context modeling. This decomposition provides a clearer path to closing the gap between practice and theory.

\section{Theoretical Rate-Distortion Analysis}
\label{par:Theoretical}

In this section, we aim to derive the theoretical R-D limits of modern learned image compression systems. Instead of relying on empirical entropy coding, we propose a mathematically grounded framework that simulates the optimal performance achievable under idealized assumptions. To this end, we decompose the overall R-D performance into three analytically tractable components: variance modeling, quantization, and context prediction.

\subsection{Overview of the Proposed Theoretical R-D Simulation Framework}

Our proposed framework aims to emulate the best-case performance of VAE-based compression models under idealized modeling assumptions. It is built around three core components:

\begin{itemize}
    \item \textbf{Optimal variance modeling:} We replace the hyperprior-estimated variances with optimal variances to match the true distribution.
    \item \textbf{Gaussian quantization:} We substitute the standard uniform scalar quantizer with a continuous Gaussian test channel, aligning with the assumptions of rate-distortion theory.
    \item \textbf{Context modeling:} We incorporate autoregressive mean prediction to capture local dependencies and reduce conditional entropy.
\end{itemize}

In the following subsections, we analyze each component independently and then combine them into a joint simulation of the theoretical R-D limit.

\subsection{Optimal Variance Modeling}
\label{subsec:opt-var}

In the Hyperprior framework, each latent variable is modeled as a zero-mean Gaussian, i.e., $y_n \sim \mathcal{N}(0, \sigma_n^2)$, where the variance $\sigma_n^2$ is predicted by a hyperprior network. This predicted variance is used to construct an entropy model for arithmetic coding.

From an information-theoretic perspective, the expected number of bits required to encode a symbol is given by the cross-entropy between the true and estimated distributions. Therefore, the accuracy of variance estimation directly affects coding efficiency (see Appendix~\ref{app:cross-entropy}).

Instead of relying on potentially biased hyperprior estimates, we consider an alternative: using the second moment of each latent as its variance. This choice can be interpreted as the result of a maximum likelihood estimation (MLE) procedure under a Gaussian assumption. The following theorem justifies this approach from both an information-theoretic and statistical perspective.

\begin{theorem}
Let $y \sim \mathcal{N}(0, \sigma^2)$ be a latent variable encoded under a Gaussian entropy model with zero mean and predicted variance $\hat{\sigma}^2$. The expected code length is minimized when the predicted variance equals the true second moment of $y$, i.e., $\hat{\sigma}^2 = \mathbb{E}[y^2]$.
\end{theorem}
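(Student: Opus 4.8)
The plan is to compute the expected code length as a cross-entropy and minimize it over $\hat\sigma^2$. Under the Gaussian entropy model with zero mean and variance $\hat\sigma^2$, the ideal (arithmetic) code length for a symbol $y$ is $-\log q_{\hat\sigma}(y)$ where $q_{\hat\sigma}$ is the Gaussian density $\mathcal{N}(0,\hat\sigma^2)$. Taking the expectation over the true law of $y$ gives the cross-entropy
\begin{equation}
L(\hat\sigma^2) \;=\; \mathbb{E}\!\left[-\log q_{\hat\sigma}(y)\right] \;=\; \tfrac{1}{2}\log(2\pi\hat\sigma^2) \;+\; \frac{\mathbb{E}[y^2]}{2\hat\sigma^2},
\end{equation}
using only linearity of expectation and the explicit form of the Gaussian log-density; note this step does not even require $y$ itself to be Gaussian, only that its second moment exists.

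Next I would minimize $L$ over the scalar parameter $v := \hat\sigma^2 > 0$. Differentiating, $\frac{dL}{dv} = \frac{1}{2v} - \frac{\mathbb{E}[y^2]}{2v^2} = \frac{v - \mathbb{E}[y^2]}{2v^2}$, which vanishes precisely at $v = \mathbb{E}[y^2]$, is negative for $v < \mathbb{E}[y^2]$ and positive for $v > \mathbb{E}[y^2]$; hence this stationary point is the unique global minimizer. Substituting back, if additionally $y\sim\mathcal{N}(0,\sigma^2)$ so that $\mathbb{E}[y^2]=\sigma^2$, the optimal choice is $\hat\sigma^2 = \sigma^2$ and the minimal expected length equals the differential entropy $\tfrac12\log(2\pi e\sigma^2)$, which matches the known optimum and serves as a consistency check.

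For the statistical interpretation, I would separately observe that given i.i.d.\ samples $y_1,\dots,y_N$ from $\mathcal{N}(0,\hat\sigma^2)$, the log-likelihood is $-\tfrac{N}{2}\log(2\pi\hat\sigma^2) - \frac{1}{2\hat\sigma^2}\sum_i y_i^2$, whose maximizer is $\hat\sigma^2 = \frac1N\sum_i y_i^2$, the empirical second moment; taking $N\to\infty$ this converges to $\mathbb{E}[y^2]$, so the code-length-optimal variance coincides with the MLE, justifying the phrasing in the theorem statement.

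There is no real obstacle here — the argument is an elementary one-dimensional convex-type optimization. The only point needing a little care is confirming that the stationary point is a genuine global minimum on $(0,\infty)$ rather than merely a critical point: I would handle this either via the sign analysis of $L'$ above or by checking $L''(v) = -\frac{1}{2v^2} + \frac{\mathbb{E}[y^2]}{v^3}$, which at $v=\mathbb{E}[y^2]$ equals $\frac{1}{2(\mathbb{E}[y^2])^2} > 0$, together with the observation that $L(v)\to\infty$ as $v\to 0^+$ and as $v\to\infty$.
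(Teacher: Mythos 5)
Your proposal is correct and follows essentially the same route as the paper: write the expected code length as the Gaussian cross-entropy $\tfrac12\log(2\pi\hat\sigma^2)+\mathbb{E}[y^2]/(2\hat\sigma^2)$, differentiate in $\hat\sigma^2$, and solve the stationarity condition to get $\hat\sigma^2=\mathbb{E}[y^2]$. Your additional sign/second-derivative check that the critical point is the unique global minimizer, and the MLE consistency remark, are welcome refinements but do not change the argument.
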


\begin{proof}
The expected code length for a single symbol is 
\begin{equation}
\mathbb{E}[R(y)] =\mathbb{E}[-\log P(y)]= \frac{1}{2} \log(2\pi \hat{\sigma}^2) + \frac{\mathbb{E}[y^2]}{2\hat{\sigma}^2}.
\end{equation}
Differentiating with respect to $\hat{\sigma}^2$ and setting to zero:
\begin{equation}
\frac{\mathrm{d}}{\mathrm{d}\hat{\sigma}^2} \mathbb{E}[R(y)] = \frac{1}{2\hat{\sigma}^2} - \frac{\mathbb{E}[y^2]}{2(\hat{\sigma}^2)^2} = 0.
\end{equation}
Solving yields $\hat{\sigma}^2 = \mathbb{E}[y^2]$, completing the proof.
\end{proof}
In practice, only one sample of $y$ is available at encoding time. 
Thus, the sample energy $y^2$ serves as an unbiased proxy for $\mathbb{E}[y^2]$, consistent with entropy coding where each code length is computed based on its own likelihood. 
Replacing the hyperprior-estimated variance with this optimal variance reduces code length and provides a theoretically grounded improvement in coding efficiency.

Modeling latents as Gaussian provides an upper bound under the maximum-entropy principle, which guarantees our rate
estimates remain conservative for any $\mathbb{E}[y^2]$-constrained distribution.
A detailed discussion of the motivation and implications of this assumption is provided in Appendix~\ref{app:gaussianity}.

\subsection{Optimal Quantization via Gaussian Test Channel}
\label{subsec:opt-qt}
In addition to the suboptimal variance estimation discussed in Section~\ref{subsec:opt-var}, the choice of quantization method also contributes significantly to the R-D gap in Hyperprior frameworks. In practice, each latent variable $y_n$ is quantized using uniform scalar quantization, typically with unit step size: $\hat{y}_n = \mathrm{round}(y_n)$, which is often approximated during training by injecting additive uniform noise: $\tilde{y}_n = y_n + U_n,U_n \sim \mathcal{U}(-0.5, 0.5)$.
However, to match the theoretical R-D limit for Gaussian sources with MSE distortion (see Appendix~\ref{app:RD-single-gaussian}), the noise introduced by quantization must follow a Gaussian distribution. The optimal test channel in this case satisfies:
\begin{equation}
y_n = \tilde{y}_n + Z_n, \quad Z_n \sim \mathcal{N}(0, D_n),
\label{eq:gaussian-test-channel}
\end{equation}
which clearly differs from the uniform noise assumption. As such, the actual R-D behavior under uniform quantization \emph{cannot achieve the theoretical limit} given by the Gaussian $R(D)$ function.

To further quantify this mismatch, we compare the effective rate achieved under the actual entropy model used in Hyperprior frameworks with the information-theoretic lower bound.

For each latent variable $y_n$, assuming an estimated variance $\sigma_n^2 = y_n^2$, the probability mass within the quantization bin $[y_n - 0.5, y_n + 0.5]$ is computed under the Gaussian model $\mathcal{N}(0, y_n^2)$. The negative logarithm of this probability gives an estimate of the code length:
\begin{equation}
R_{\mathrm{uniform}} = -\log_2 \int_{y_n - 0.5}^{y_n + 0.5} \mathcal{N}(y; 0, y_n^2) \, \mathrm{d}y.
\label{eq:rate-uniform}
\end{equation}

To maintain a fair comparison, we fix the distortion to that of uniform quantization noise, i.e., $D = \frac{1}{12}$, and compute the theoretical rate using the Gaussian rate-distortion function:
\begin{equation}
R_{\mathrm{opt}} =
\begin{cases}
\frac{1}{2} \log_2 \left( \frac{y_n^2}{D} \right), & \text{if } y_n^2 > D, \\
0, & \text{otherwise}.
\end{cases}
\label{eq:rate-optimal}
\end{equation}
\begin{figure}[t]
    \centering
    \includegraphics[width=0.36\textwidth]{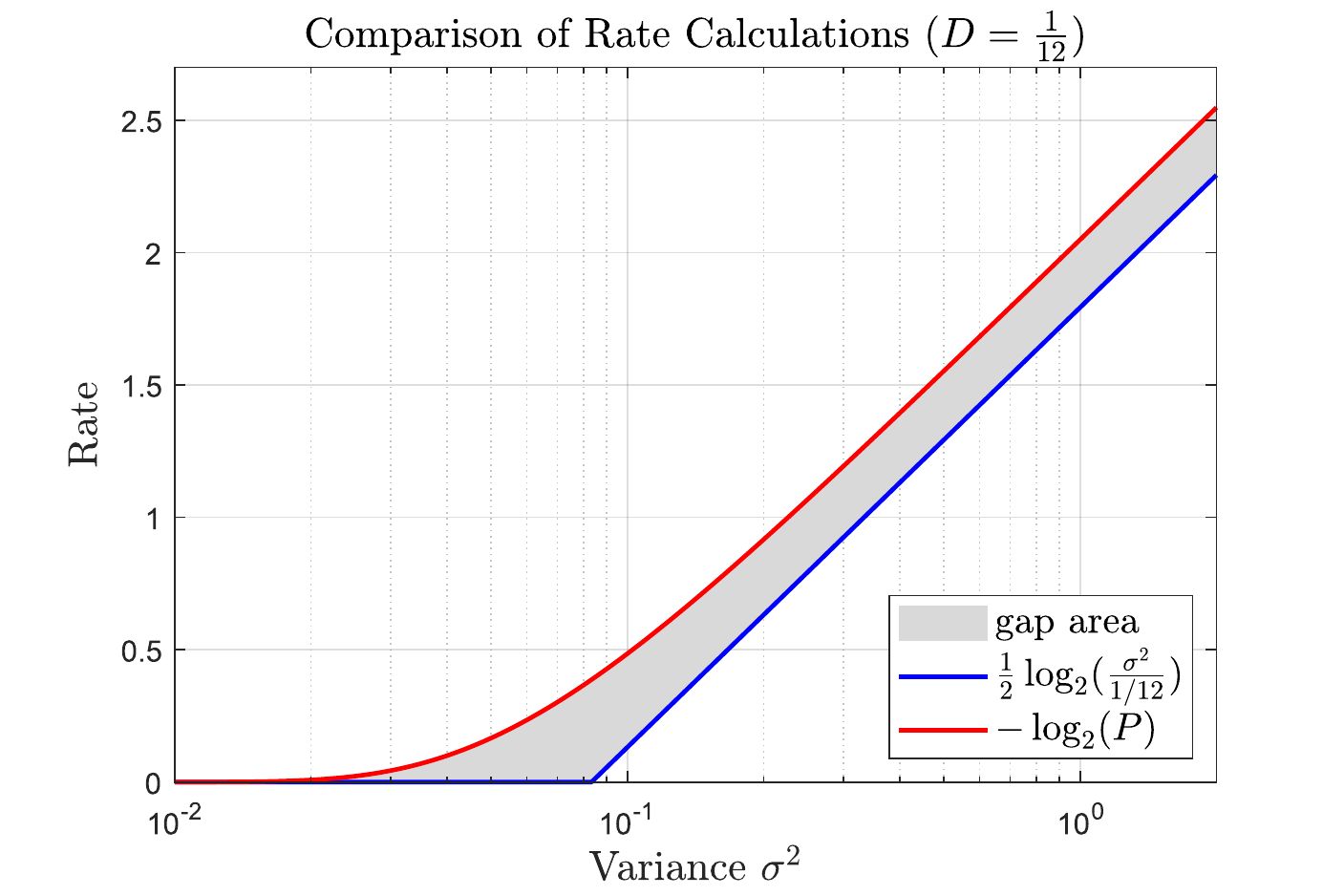}
    \includegraphics[width=0.36\textwidth]{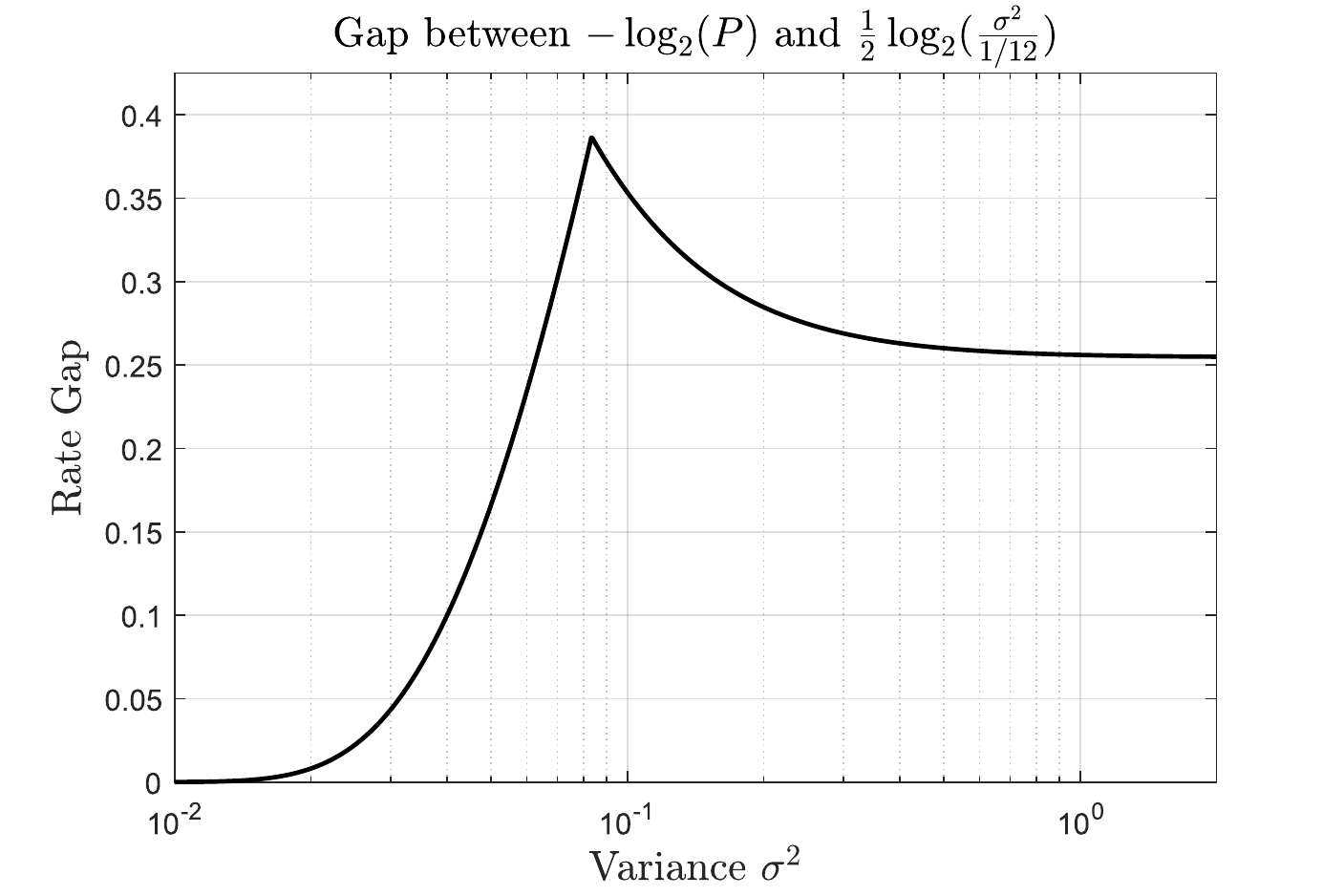}
    \caption{
    Top (a): Rate-variance curves under uniform quantization and the optimal Gaussian test channel.
    Bottom (b): Pointwise difference in coding rate between the two strategies, highlighting the inefficiency of uniform quantization.
    }
    \label{fig:compare_single_symbol_rate}
\end{figure}
\vspace{-0.8em}

We conduct a simple simulation by encoding $y_n$ using both methods and compare the rates. As illustrated in  Figure~\ref{fig:compare_single_symbol_rate}, under uniform quantization, the coding rate is strictly positive for any non-zero variance and increases monotonically with the variance. In contrast, for the optimal Gaussian test channel, symbols with variance below a certain distortion threshold are not encoded ($R=0$). Only when the variance exceeds this threshold does the rate become positive (see Appendix~\ref{app:RD-single-gaussian}). The rate discrepancy between the two schemes is most pronounced near the distortion threshold, and gradually diminishes as the variance increases. Appendix~\ref{app:rate-gap-derivation} formally proves that this rate gap asymptotically converges to 0.254 bits~\cite{zamir2014lattice}. For multiple independent Gaussian sources, the distortion allocation naturally follows the \textit{reverse water-filling principle}, as detailed in Appendix~\ref{app:RD-multi-gaussian}.

\subsection{Joint Simulation of Variance and Quantization under Ideal Conditions}

To establish a practical framework for the performance ceiling of learned image compression models, we construct a joint simulation that integrates the two previously derived components: optimal variance estimation and Gaussian test channel quantization. This simulation assumes ideal entropy modeling and bypasses actual bitstream generation by analytically computing expected rates and distortions.

Specifically, we assume each latent variable $y_n$ follows an independent Gaussian distribution with optimal variance $y_n^2$. Quantization is modeled using the optimal Gaussian test channel, and distortion is allocated via the reverse water-filling algorithm to minimize total rate under a global distortion constraint.

To simulate the test channel $y_n = \hat{y}_n + n_n$ with $n_n \sim \mathcal{N}(0, D_n)$, we avoid injecting noise directly into $y_n$, which would incorrectly inflate the mutual information. Instead, we first apply a scaling factor and then add noise, yielding:
\begin{equation}
\hat{y}_n = \eta_n \cdot y_n + z_n, \quad z_n \sim \mathcal{N}(0, D_n).
\end{equation}

The scaling coefficient $\eta_n$ is chosen such that the mutual information between $y_n$ and $\hat{y}_n$ equals the theoretical rate under the optimal Gaussian test channel. This is formalized below.

\begin{theorem}
\label{thm:scaling}
Let $y_n \sim \mathcal{N}(0, y_n^2)$ and $z_n \sim \mathcal{N}(0, D_n)$ be independent. If $\hat{y}_n = \eta_n \cdot y_n + z_n$,
then the mutual information $I(y_n, \hat{y}_n)$ equals $\frac{1}{2} \log_2 \left( \frac{y_n^2}{D_n} \right)$ if and only if

\begin{equation}
    \eta_n = \sqrt{1 - \frac{D_n}{y_n^2}}.
\end{equation}

\end{theorem}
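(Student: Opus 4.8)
The plan is to compute the mutual information $I(y_n;\hat y_n)$ directly from the Gaussian channel $\hat y_n = \eta_n y_n + z_n$, express it as a function of $\eta_n$ (and the fixed quantities $y_n^2$ and $D_n$), and then solve the resulting scalar equation for $\eta_n$. Since $y_n$ and $z_n$ are independent Gaussians, $\hat y_n$ is Gaussian with variance $\mathrm{Var}(\hat y_n) = \eta_n^2 y_n^2 + D_n$, and the pair $(y_n,\hat y_n)$ is jointly Gaussian. The standard formula for the mutual information of jointly Gaussian variables gives
\begin{equation}
I(y_n;\hat y_n) = \frac{1}{2}\log_2\!\frac{\eta_n^2 y_n^2 + D_n}{D_n},
\end{equation}
which follows either from $h(\hat y_n) - h(\hat y_n \mid y_n)$ with $h(\hat y_n \mid y_n) = \tfrac12\log_2(2\pi e D_n)$, or equivalently from $1 - \rho^2$ where $\rho$ is the correlation coefficient between $y_n$ and $\hat y_n$.

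Next I would set this expression equal to the target rate $\tfrac12\log_2(y_n^2/D_n)$. Equating the arguments of the logarithms yields $\eta_n^2 y_n^2 + D_n = y_n^2$, hence $\eta_n^2 = 1 - D_n/y_n^2$, and taking the positive root gives $\eta_n = \sqrt{1 - D_n/y_n^2}$. The ``if and only if'' is immediate: the map $\eta_n \mapsto I(y_n;\hat y_n)$ is strictly monotincreasing in $\eta_n^2$, so the rate constraint pins down $\eta_n^2$ uniquely, and the sign convention (a nonnegative scaling) selects the positive square root. I would also note in passing that this requires $D_n \le y_n^2$, which is exactly the regime in which the reverse water-filling solution assigns a nonzero rate to component $n$ — consistent with the rate formula \eqref{eq:rate-optimal}; when $D_n = y_n^2$ we get $\eta_n = 0$ and $I = 0$, matching the ``not encoded'' case.

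There is no real obstacle here — the argument is a short computation — but the one point deserving care is justifying that the scaling-then-noise construction actually realizes the optimal test channel in the backward (reconstruction) direction rather than just matching the mutual information numerically. Concretely, one should check that with $\eta_n = \sqrt{1 - D_n/y_n^2}$ the reconstruction $\hat y_n$ can be written as $y_n$ minus an independent Gaussian of variance $D_n$ after an appropriate affine rescaling, i.e. that $y_n = \hat y_n/\eta_n + w_n$ does \emph{not} hold but the MMSE-optimal estimator of $y_n$ from $\hat y_n$ induces the correct backward channel $y_n = \tilde y_n + Z_n$ with $Z_n \sim \mathcal N(0,D_n)$ as in \eqref{eq:gaussian-test-channel}. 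This is the standard forward/backward equivalence for the Gaussian rate-distortion test channel; I would state it briefly and refer to Appendix~\ref{app:RD-single-gaussian} for the details, keeping the theorem's proof itself to the two-line mutual-information calculation above.
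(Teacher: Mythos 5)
Your proposal is correct and follows essentially the same route as the paper's proof: compute $\mathrm{Var}[\hat{y}_n]=\eta_n^2 y_n^2+D_n$, use $I(y_n;\hat{y}_n)=\tfrac12\log_2\bigl((\eta_n^2 y_n^2+D_n)/D_n\bigr)$, and equate with $\tfrac12\log_2(y_n^2/D_n)$ to solve for $\eta_n$. Your added monotonicity remark cleanly handles the ``only if'' direction, and the closing aside about the backward test channel, while not needed for the theorem as stated, is consistent with the paper's second (channel-capacity) proof.
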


The proof of Theorem~\ref{thm:scaling} is provided in Appendix~\ref{app:scale}.

\begin{figure}[t]
  \centering
  \includegraphics[width=\linewidth]{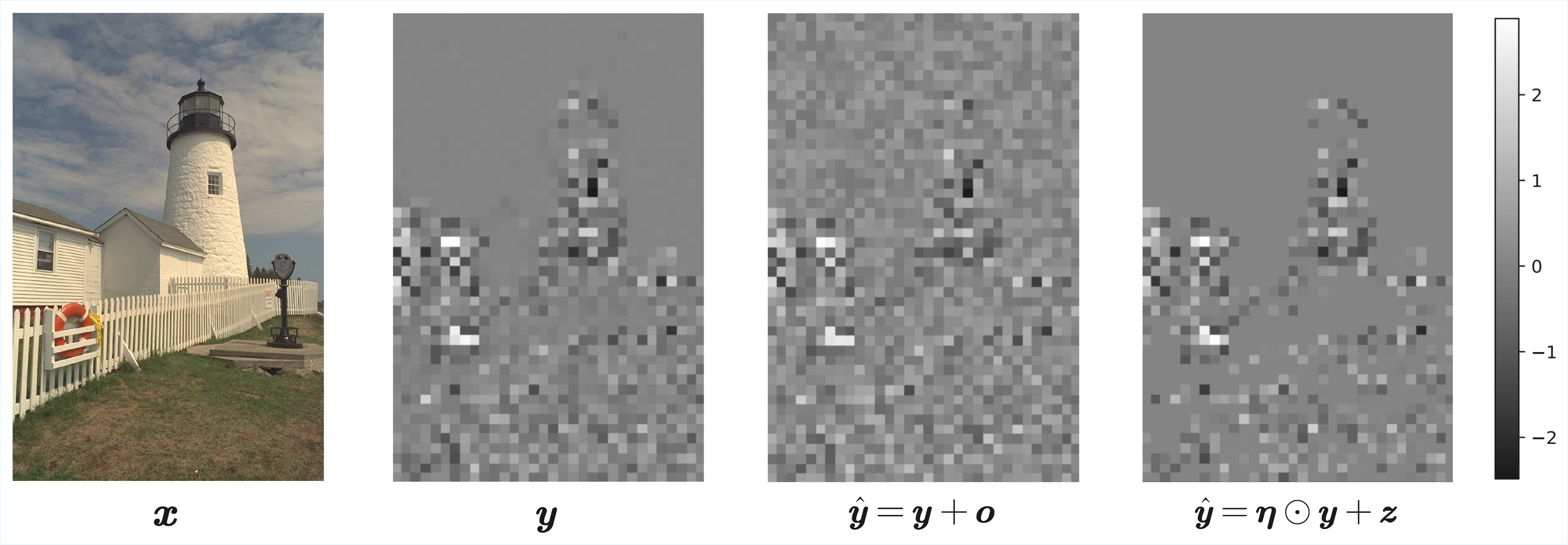}
  \caption{From left to right: original image, latents, uniform-quantized output, and Gaussian test-channel output. Low-variance regions (e.g., sky) receive zero-rate allocation only under the Gaussian test channel.}
  \label{fig:latent}
\end{figure}

Figure~\ref{fig:latent} illustrates the latent representation $\boldsymbol{\hat{y}}$ produced by the described simulation approach. We observe that in low-variance regions (typically corresponding to smooth background areas such as sky), the latent variables are suppressed to zero, resulting in zero rate allocation according to the reverse water-filling principle. In contrast, under the uniform quantization scheme, all latent variables are added uniform noise and encoded regardless of their variance. 

The perturbed latent $\boldsymbol{\hat{y}}$ is passed to the synthesis decoder to reconstruct the image $\boldsymbol{\hat{x}}$. Training is guided by a standard R-D objective:
\begin{equation}
    \mathcal{L} = D(\boldsymbol{x}, \boldsymbol{\hat{x}}) + \lambda R.
\end{equation}

where the distortion is measured by MSE,
and the rate is estimated analytically:
\begin{equation}
R = \frac{1}{N} \sum_n \frac{1}{2} \log_2\left( \frac{y_n^2}{D_n} \right).
\label{eq:rate}
\end{equation}

\subsection{Incorporating Context Modeling into the Theoretical Framework}

In the previous analysis, we assumed that the latent variables $\{y_n\}$ are independent Gaussian sources, without considering the correlations (e.g., spatial dependencies in $\boldsymbol{y}$) that may exist between them. However, in practice, even after the analysis transform, local dependencies between latent variables often persist. When correlations exist, the actual mutual information will be strictly lower than the sum of individual rates in Eq.~\ref{eq:rate} (see Appendix~\ref{app:correlated_rate})

To further improve compression efficiency, modern learned compression frameworks widely adopt \emph{context modeling}, which predicts the local mean of each latent variable using its neighboring latents to reduce the entropy required for encoding. Such context models are typically autoregressive, such as PixelCNN~\cite{van2016pixel}, and effectively capture spatial dependencies in the latent space.

From an information-theoretic perspective, the availability of context reduces uncertainty, as expressed by the following inequality:
\begin{equation}
H(\hat{y}_n \mid \text{context}_n) \leq H(\hat{y}_n).
\end{equation}
This indicates that a well-designed context model can significantly reduce the conditional entropy of latent variables, and thus lower the average bitrate.

In the idealized compression framework, context modeling is equivalent to introducing a local mean prediction $\hat{\mu}_n$ for each latent. Instead of encoding $y_n$ directly, the system encodes the residual $y_n - \hat{\mu}_n$. The corresponding theoretical encoding rate becomes:
\begin{equation}
R_n = \frac{1}{2} \log_2  \frac{(y_n - \hat{\mu}_n)^2}{D_n}.
\end{equation}

This formulation naturally integrates the effect of context modeling into R-D estimation: the more accurate the context prediction, the smaller the residual, and hence the lower the number of bits required. It also reveals the functional difference between variance modeling and mean modeling in neural compression systems:

\begin{itemize}
    \item \textbf{Variance modeling} (e.g., via hyperprior networks) focuses on constructing predictive distributions that closely match the true latent distribution, thus reducing the cross-entropy and redundant bitrate.
    \item \textbf{Mean modeling} (e.g., via context networks) lowers the intrinsic entropy of the latent representation by predicting local means that capture spatial dependencies, thereby enabling more efficient compression.

\end{itemize}

\subsection{Unified Rate-Distortion Formulation and Simulation Procedure}
\begin{algorithm}
\caption{End-to-End Simulation of Theoretical R-D Limit}
\label{alg:rd-limit-simulation-context}
\begin{algorithmic}[1]
\Require Input image $\boldsymbol{x}$, analysis transform $g_a$, synthesis transform $g_s$, context predictor $f_\mu$, distortion budget $D$, Lagrange multiplier $\lambda$
\Ensure Trained $g_a$, $g_s$ and $f_\mu$ that approximates theoretical R-D behavior

\State $\boldsymbol{y} \gets g_a(\boldsymbol{x})$ \Comment{Encode image to latent}
\ForAll{$n$ in latent indices}
    \State $\hat{\mu}_n \gets f_\mu(\boldsymbol{\hat{y}}_{<n})$ \Comment{Predict mean using context}
    \State $r_n \gets y_n - \hat{\mu}_n$ \Comment{Compute residual}
    \State $v_n \gets r_n^2$ \Comment{Estimate variance of residual}
\State $\{D_n\} \gets$ ReverseWaterFilling{$\{v_n\}$} \Comment{Bit allocation}
    \State $\bar{y}_n \gets \sqrt{1 - D_n / y_n^2} \cdot y_n$ \Comment{Scale residual to preserve mutual information}
    \State $\hat{y}_n \gets \bar{y}_n + \mathcal{N}(0, D_n)$ \Comment{Simulate Gaussian test channel}
\EndFor
\State $\boldsymbol{\hat{x}} \gets g_s(\boldsymbol{\hat{y}})$ \Comment{Decode perturbed latents}
\State Compute distortion: $D=\mathrm{MSE}(\boldsymbol{x},\boldsymbol{\hat{x}})$
\State Compute rate estimate: \[
R = \frac{1}{N} \sum_n \frac{1}{2} \log_2\left( \frac{v_n}{D_n} \right)
\]
\State $\mathcal{L} \gets D + \lambda R$
\State Update $g_a$, $g_s$ and $f_\mu$ by minimizing $\mathcal{L}$ via gradient descent
\end{algorithmic}
\end{algorithm}

To summarize the theoretical framework, we present the final joint formulation of rate and distortion that integrates three key components: optimal variance, optimal Gaussian quantization, and optional context-based mean prediction. This formulation reflects the idealized behavior of a learned image compression system under information-theoretic assumptions. The complete simulation procedure is detailed in Algorithm~\ref{alg:rd-limit-simulation-context}, while the corresponding analytical expressions for rate and distortion are provided in Equation~\ref{eq:joint_RD_model}. Together, they offer a unified and executable framework for approximating the R-D bound of modern variational compression architectures.

\vspace{-0.8em}
\begin{equation}
\left\{
\begin{aligned}
& R = \frac{1}{N} \sum_{n=1}^{N} \frac{1}{2} \log_2 \left[ \frac{(g_a(\boldsymbol{x})_n - f_\mu(\boldsymbol{\hat{y}}_{<n}))^2}{D_n} \right] \\
& D = \frac{1}{|\boldsymbol{x}|} \sum_{i=1}^{|\boldsymbol{x}|} (x_i - g_s(\boldsymbol{\hat{y}})_i)^2
\end{aligned}
\right.
\label{eq:joint_RD_model}
\end{equation}

Here, $g_a(\cdot)$ and $g_s(\cdot)$ denote the analysis and synthesis transforms, 
$f_\mu(\cdot)$ is the context-based mean predictor, 
and $D_n$ is the distortion allocated through reverse water filling. 
The overall end-to-end distortion $D$ is by default measured as MSE, but can also be defined using other differentiable metrics 
(e.g., MS-SSIM~\cite{wang2003msssim}), leveraging the expressive capacity of neural networks to learn the corresponding mappings 
(see Appendix~\ref{app:gaussianity} for further discussion).

\section{Experiment}
\label{par:experiment}
In this section, we conduct a series of experiments to validate the proposed theoretical framework. We compare the estimated variance and uniform quantization with optimal variance and Gaussian test channel quantization to evaluate how much they contribute to the R-D gap. We then jointly simulate optimal variance and quantization to approximate the R-D limit of the Hyperprior framework with context modeling. Finally, we compare our method with existing estimation approaches and previous practical image codecs.

\subsection{Experimental Setup}
All models are implemented in PyTorch~\cite{paszke2019pytorch}. The proposed framework adopts a two-stage training strategy. First, we initialize the model parameters using a pre-trained Hyperprior network. Subsequently, we perform an additional 200,000 training iterations for our framework. This fine-tuning phase employs Adam optimizer~\cite{kingma2014adam} with a learning rate of $6\times10^{-5}$ and a batch size of 100 on a single NVIDIA RTX 4090 GPU. The two-stage training is adopted purely for efficiency; training the model from scratch yields equivalent performance but requires longer convergence time.
We use 100,000 randomly sampled images from the OpenImages dataset~\cite{openimagesdataset} for training, and evaluate on the Kodak dataset~\cite{kodakdataset}. The Lagrange multiplier $\lambda$ varies from $5$ to $700$.  Compression performance is reported in terms of average bits-per-pixel (bpp), peak signal-to-noise ratio (PSNR, inversely related to MSE) and MS-SSIM. To compare our approach with existing $R(D)$ estimation methods, we additionally conduct experiments on the MNIST dataset~\cite{lecun1998gradient}, which offers a tractable setting for evaluating theoretical R-D bounds. The complete implementation details are provided in Appendix \ref{app:architecture}.

\begin{figure*}[!t]
    \centering
    \subfloat[]{
        \includegraphics[width=0.49\textwidth]{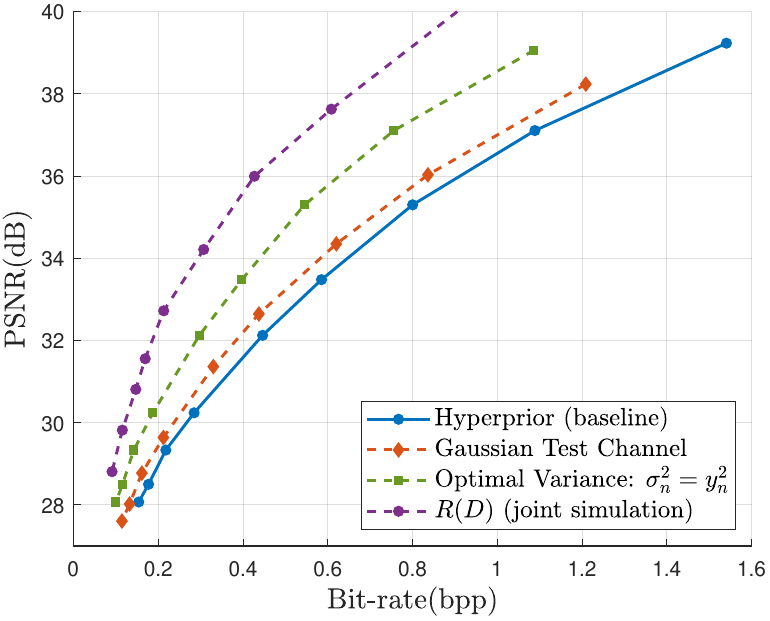}
    }
    \subfloat[]{
        \includegraphics[width=0.49\textwidth]{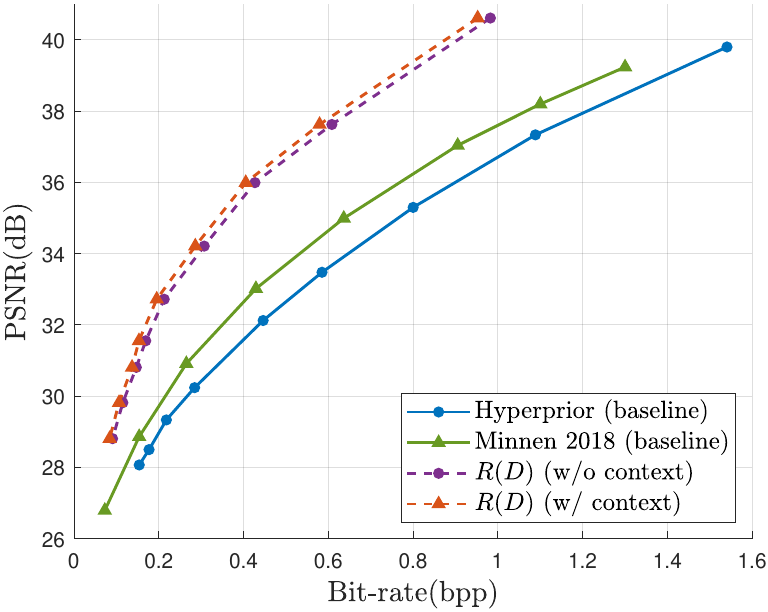}
        \label{fig:effect_of_qt}
    }
    \caption{
Rate-distortion comparison of different components.
Left (a): Comparison between the Hyperprior baseline and variants enhanced with (i) optimal quantization (Gaussian test channel), (ii) optimal variance estimation, and (iii) joint optimization of both.
Right (b): Comparison between Hyperprior baseline, the context-enhanced~\cite{minnen2018joint} model, joint optimization without and with context modeling.
    }
    \label{fig:ablation}
    \vspace{-0.8em}
\end{figure*}

\subsection{Isolating the Impact of Variance Estimation and Quantization}
We begin by isolating two key non-idealities in practical compression: inaccurate variance estimation and non-Gaussian quantization noise. To assess the former, we replace the predicted variances from the hyperprior with $\sigma_n^2 = y_n^2$. For quantization, we compare the default uniform scalar quantizer with a simulated Gaussian test channel using fixed distortion $D = \frac{1}{12}$.


Figure~\ref{fig:ablation}(a) presents the $R(D)$ curves comparing the baseline Hyperprior with three progressively refined configurations: (1) optimal variance, (2) quantization via the optimal Gaussian test channel instead of uniform scalar quantization, and (3) a joint simulation combining both components. Each refinement brings the R-D performance closer to the theoretical limit, with the joint configuration achieving the most significant improvement. The results clearly demonstrate that both variance modeling and quantization strategy play critical roles in closing the gap between practical codecs and information-theoretic bounds. These results validate the theoretical insights from Section~\ref{subsec:opt-var} and \ref{subsec:opt-qt}, and quantify the individual contributions of each factor to the overall R-D gap.

\subsection{Joint Simulation Results}
Following the procedure described in Algorithm~\ref{alg:rd-limit-simulation-context}, we conduct joint simulation experiments that compute the R-D under context-aware settings. The corresponding results are illustrated in Figure~\ref{fig:ablation}(b). The plotted curves reveal a clear performance gap between the baseline Hyperprior model and the theoretical bounds estimated by our framework. Incorporating optimal variance estimation and Gaussian test channel quantization already yields notable gains. Further integrating context-based mean prediction results in an additional bitrate reduction, confirming the importance of capturing spatial dependencies, which indicates that effective context modeling is critical for narrowing the gap between practical codecs and the R-D bound. Similar trends are observed when training with MS-SSIM as the distortion metric. We additionally visualize representative reconstructions from these experiments in Appendix~\ref{app:qualitative}.

\begin{figure*}[htbp]
    \centering
    \includegraphics[width=0.325\textwidth]{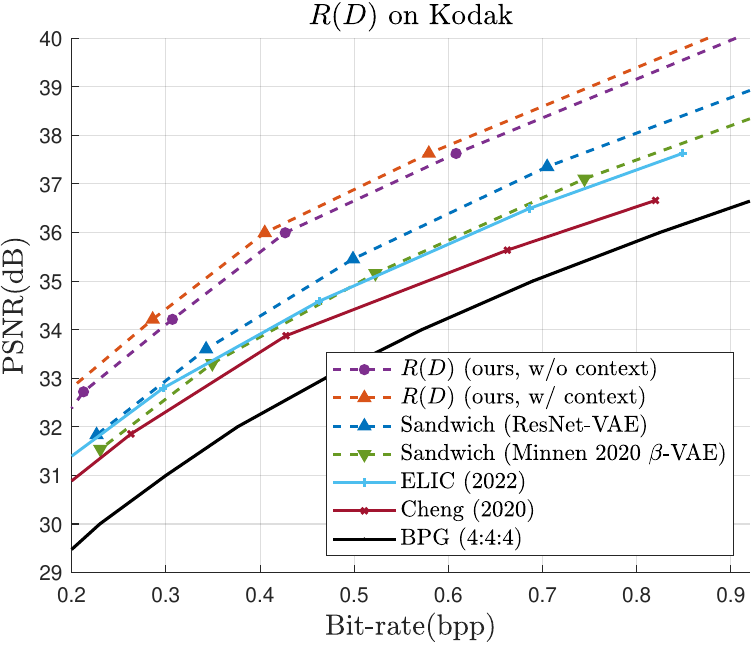}
    \includegraphics[width=0.325\textwidth]{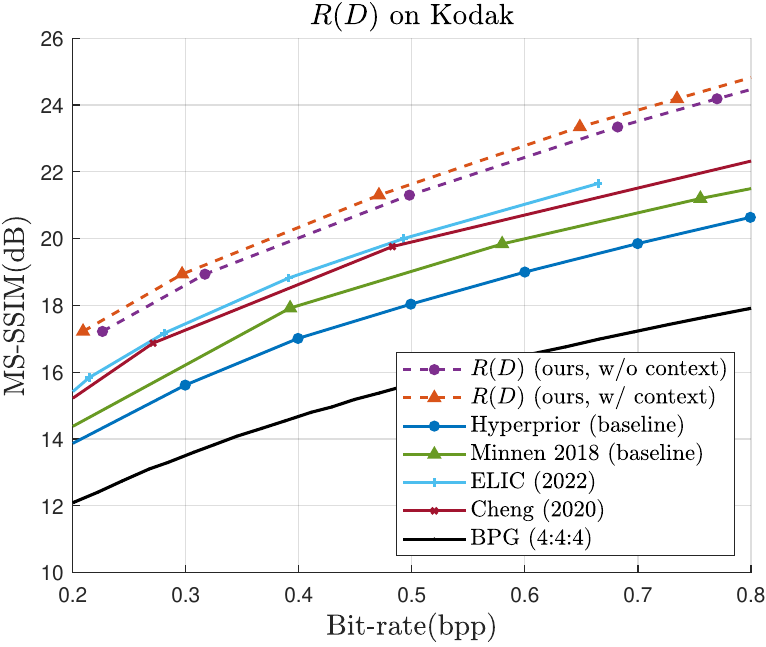}
    \includegraphics[width=0.325\textwidth]{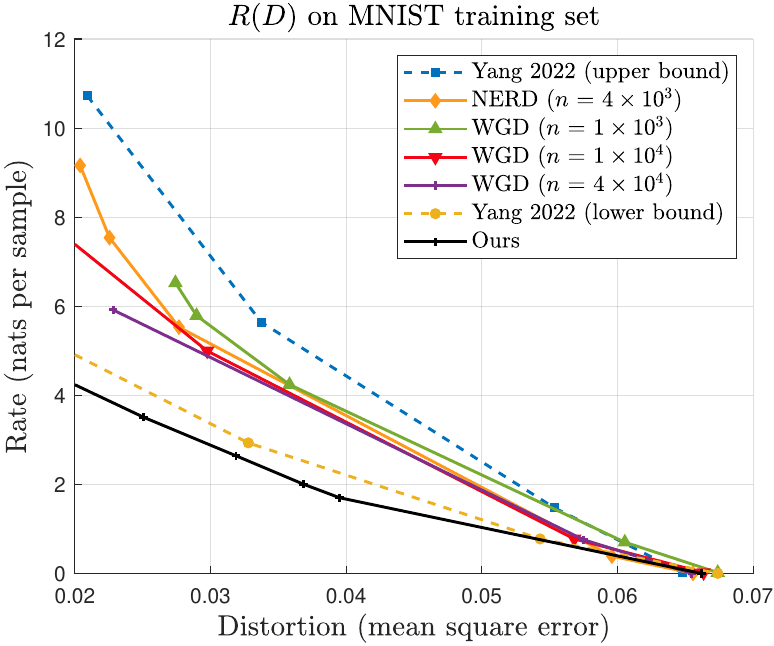}
    \caption{
    Left (a): Comparison of our theoretical R-D estimates with prior sample-driven estimators and the actual performance of learned image compression models on the Kodak dataset measured by PSNR. 
    Middle (b): Comparison of ours with practical compression models measured by MS-SSIM. 
    Right (c): Comparison of different R-D estimation methods on the MNIST training set.}
    
    \label{fig:joint_limit}
\end{figure*}

\subsection{Comparison with Theoretical Rate-Distortion Estimation Methods}
We further compare our proposed method with several representative sample-driven R-D estimation frameworks, including the Sandwich Bound~\cite{yang2022towards}, NERD~\cite{lei2022neural}, and WGD~\cite{yang2023estimating}. Figure~\ref{fig:joint_limit}(a) presents the estimated $R(D)$ curves on the Kodak dataset. Our method consistently yields tighter bounds, highlighting its superior accuracy in approximating the information-theoretic limit.

To assess generalization to a different data domain, we also perform $R(D)$ estimation on the MNIST training dataset~\cite{lecun1998gradient}, as shown in Figure~\ref{fig:joint_limit}(c). Notably, due to the small spatial dimensions of MNIST images ($28 \times 28$ pixels), downsampling eliminates nearly all spatial correlations. As a result, the performance gain using context model becomes negligible in this scenario. The results again demonstrate that our approach achieves the closest approximation to the theoretical lower bound among all evaluated methods, confirming the robustness and precision of our estimation framework across datasets.


\subsection{Comparison with Practical Learned Image Codecs}
Figure~\ref{fig:joint_limit}(a) also shows the empirical R-D curves alongside our theoretical lower bound. While the latest models narrow the gap significantly, our bound remains higher. This suggests that even recent state-of-the-art systems have not yet fully saturated the potential of their architectural priors, especially in high-rate regimes. The proposed framework thus serves as a practical benchmark for future model development.

\section{Conclusion}
In this work, we present a theoretical simulation framework based on the Hyperprior architecture to approximate the R-D limit of learned image compression systems, which serves as a fundamental tool for analyzing the performance gap and contributing factors between learned image compression systems and their R-D limits. Specifically, we isolate and quantify the impact of three core components: variance estimation, quantization, and context modeling.

Compared to prior sample-driven estimators, our approach not only offers structural interpretability and actionable guidance, but also consistently achieves lower estimated R-D curves in practice, providing a tighter bound to the theoretical optimum.

While our framework provides a theoretically grounded approximation to the R-D limit, its accuracy still depends on the expressiveness and architecture of the underlying neural networks. Employing more advanced network designs may further tighten the estimated bounds and push the limits of learned compression closer to the information-theoretic optimum.

\bibliographystyle{IEEEtran}
\bibliography{reference}

\clearpage
\appendices
\section{Theoretical Foundations of Rate-Distortion Analysis}

This appendix provides detailed derivations and theoretical background supporting the main analysis in Section \ref{par:Theoretical}.

\subsection{Cross-Entropy and Bitrate under Mismatched Distribution}
\label{app:cross-entropy}
Let $P(y)$ be the true distribution of a symbol and $Q(y)$ be the predicted model used for encoding. According to Shannon’s source coding theorem~\cite{shannon1948mathematical}, the average number of bits required to encode a symbol drawn from $P$ using model $Q$ is given by the cross-entropy:
\begin{equation}
H(P, Q) = -\sum_y P(y) \log Q(y)
\end{equation}

In the case of continuous distributions $p(y)$ and $q(y)$, this generalizes to:
\begin{equation}
H(p, q) = -\int p(y) \log Q(y) \, dy
\end{equation}

Minimizing this quantity is equivalent to minimizing the KL-divergence between $P$ and $Q$ (or $p$ and $q$) plus the constrained entropy of $P$. Therefore, improving the match between the estimated model and the true distribution reduces the code length.

\subsection{Rate-Distortion Function of Gaussian Sources}
\label{app:RD-single-gaussian}

For a zero-mean Gaussian source $X \sim \mathcal{N}(0, \sigma^2)$ and squared-error distortion, the rate-distortion function is:
\begin{equation}
R(D) = \begin{cases}
\frac{1}{2} \log_2 \left(\frac{\sigma^2}{D}\right), & \text{if } 0 < D < \sigma^2 \\
0, & \text{if } D \geq \sigma^2
\end{cases}
\end{equation}

This expression represents the fundamental lower bound on the bitrate required to represent Gaussian sources under mean squared error (MSE) distortion.

\paragraph{Gaussian Test Channel Interpretation.}
The rate-distortion function $R(D)$ can be interpreted through the lens of a Gaussian test channel:
\begin{equation}
    X = \hat{X} + Z,
\end{equation}
where $Z \sim \mathcal{N}(0, D)$ is independent Gaussian noise with variance equal to the allowed distortion level $D$, and $\hat{X}\sim\mathcal{N}(0,\sigma^2-D)$. This model ensures that the expected distortion $\mathbb{E}[(X - \hat{X})^2] = D$.

The mutual information between $X$ and $\hat{X}$ in this setting is:
\begin{equation}
I(X; \hat{X}) = \frac{1}{2} \log_2 \left( \frac{\sigma^2}{D} \right),
\end{equation}
which matches the rate-distortion function $R(D)$. This indicates that the Gaussian test channel achieves the minimum rate required for a given distortion level, making it an optimal solution in the rate-distortion sense.

This interpretation is foundational in information theory and is detailed in standard references such as Cover and Thomas's \emph{Elements of Information Theory}~\cite{Cover2006}.

\subsection{Reverse Water-Filling for Multiple Gaussian Sources}
\label{app:RD-multi-gaussian}

Consider $m$ independent Gaussian sources $X_i \sim \mathcal{N}(0, \sigma_i^2)$ and a total distortion budget $D$. The optimal allocation of distortion $D_i$ to each source minimizes the total rate:
\begin{equation}
R(D) = \min_{\sum D_i \leq D} \sum_{i=1}^m \max\left(0, \frac{1}{2} \log_2 \left(\frac{\sigma_i^2}{D_i}\right)\right).
\end{equation}

The closed-form solution is found using the reverse water-filling algorithm:
\begin{itemize}
  \item Choose a water level $\alpha$.
  \item Set $D_i = \min(\sigma_i^2, \alpha)$ for all $i$.
  \item Adjust $\alpha$ such that $\sum D_i = D$.
\end{itemize}

Only sources with $\sigma_i^2 > \alpha$ contribute nonzero rates.

\subsection{Alignment of Latent and Pixel Domain Distortions}
\label{app:latent-pixel-alignment}

We formalize the relation between distortions in the latent and pixel domains, 
drawing connections to classical transform coding and the optimization process of learned compression models.

\paragraph{Connection to classical transforms.} 
In transform coding, orthogonal linear transforms such as the Karhunen--Lo\`eve Transform (KLT) preserve mean squared error (MSE) between the original and transformed domains. 
This property ensures that distortion allocation in the transform domain directly reflects distortion in the pixel domain. 

\paragraph{Role of learned transforms.} 
In our framework, the analysis and synthesis transforms $g_a(\cdot)$ and $g_s(\cdot)$ are parameterized by neural networks. 
Although these transforms are nonlinear and not strictly orthogonal, they are trained end-to-end with a pixel-domain distortion objective. 
This optimization drives the learned transforms to approximate decorrelating mappings that preserve distortion across domains. 

\paragraph{Optimization-driven alignment.} 
Equation~\ref{eq:joint_RD_model} illustrates this principle: the reverse water-filling procedure governs latent-domain allocations $D_n$, 
while the global distortion $D$ is defined in the pixel space. 
Because optimization explicitly minimizes $D$, the latent-domain allocations remain aligned with the pixel-domain criterion. 
This mechanism justifies applying reverse water-filling in the latent space as a tractable approximation to pixel-domain distortion. 
Moreover, the expressive capacity of neural networks allows the same formulation to be extended beyond MSE, 
enabling alternative differentiable distortion metrics (e.g., MS-SSIM) to be incorporated within the same framework.

\section{Derivation of the Rate Gap Between Uniform Quantization and the Shannon Limit}
\label{app:rate-gap-derivation}
In high-rate quantization theory, it is known that uniform scalar quantization introduces a fixed gap compared to the theoretical rate-distortion (R-D) lower bound given by Shannon's formula. This section derives the asymptotic rate gap of approximately 0.254 bits per sample, which arises when using a unit-step uniform quantizer with Gaussian sources~\cite{zamir2014lattice}.

\subsection{Shannon Rate-Distortion Bound for Gaussian Source}

For a zero-mean Gaussian source $X \sim \mathcal{N}(0, \sigma^2)$ and mean squared error (MSE) distortion $D$, the Shannon lower bound on the minimum achievable rate is:
\begin{equation}
R_{\mathrm{Shannon}} = \frac{1}{2} \log_2 \left( \frac{\sigma^2}{D} \right).
\end{equation}

\subsection{Rate of High-Resolution Uniform Scalar Quantizer}

Let us consider a unit-step uniform scalar quantizer with interval size $\Delta = 1$. The quantization noise can be modeled as a uniform distribution over $[-\Delta/2, \Delta/2]$. The corresponding quantization distortion is:
\begin{equation}
D_{\mathrm{uniform}} = \frac{\Delta^2}{12} = \frac{1}{12}.
\end{equation}
The rate required to code a Gaussian source using uniform quantization is given by the asymptotic formula:
\begin{equation}
R_{\mathrm{uniform}} \approx \frac{1}{2}\log_2\frac{\sigma^2}{D}+\frac{1}{2} \log_2\left( 2\pi e \cdot G \right),
\end{equation}
where $G$ is the normalized second moment of the quantizer, and for uniform quantization, $G = 1/12$.

\subsection{Gap Computation}

The rate gap between uniform quantization and the Shannon bound is therefore:
\begin{equation}
\Delta R = R_{\mathrm{uniform}} - R_{\mathrm{Shannon}} = \frac{1}{2} \log_2(2\pi e \cdot G) = \frac{1}{2} \log_2\left( \frac{\pi e}{6} \right).
\end{equation}

Substituting values:
\begin{equation}
\Delta R = \frac{1}{2} \log_2\left( \frac{\pi e}{6} \right) \approx 0.254 \text{ bits/sample}.
\end{equation}

\subsection{Interpretation}

This result implies that even under optimal entropy coding, a uniform scalar quantizer incurs an irreducible rate penalty of approximately 0.254 bits/sample compared to the Shannon lower bound. This is due to the shape mismatch between the uniform quantization noise and the optimal Gaussian noise assumed in Shannon’s test channel model.

\section{Proof of Theorem 2 (Scaling Factor for Gaussian Test Channel)}
\label{app:scale}
\textbf{Theorem 2.} 
\textit{Let} $y_n \sim \mathcal{N}(0, y_n^2)$ \textit{and} $z_n \sim \mathcal{N}(0, D_n)$ \textit{be independent. If} $\hat{y}_n = \eta_n \cdot y_n + z_n$,
\textit{then the mutual information} $I(y_n, \hat{y}_n)$ \textit{equals} $\frac{1}{2} \log_2 \left( \frac{y_n^2}{D_n} \right)$ \textit{if and only if}

\begin{equation}
    \eta_n = \sqrt{1 - \frac{D_n}{y_n^2}}.
\end{equation}

\begin{proof}
Since $y_n \sim \mathcal{N}(0, y_n^2)$ and $z_n \sim \mathcal{N}(0, D_n)$ are independent, the linear combination
\begin{equation}
    \hat{y}_n = \eta_n \cdot y_n + z_n
\end{equation}

is also Gaussian with zero mean. Its variance is given by:
\begin{equation}
    \mathrm{Var}[\hat{y}_n] = \eta_n^2 \cdot \mathrm{Var}[y_n] + \mathrm{Var}[z_n] = \eta_n^2 y_n^2 + D_n.
\end{equation}

The mutual information between two jointly Gaussian random variables $y_n$ and $\hat{y}_n$ is:
\begin{equation}
I(y_n, \hat{y}_n) = \frac{1}{2} \log_2 \left( \frac{\mathrm{Var}[\hat{y}_n]}{\mathrm{Var}[\hat{y}_n \mid y_n]} \right).
\end{equation}

Given $\hat{y}_n = \eta_n y_n + z_n$, the conditional variance $\mathrm{Var}[\hat{y}_n \mid y_n]$ is simply the variance of $z_n$, since $z_n$ is independent noise:
\begin{equation}
\mathrm{Var}[\hat{y}_n \mid y_n] = D_n.
\end{equation}

Thus,
\begin{equation}
I(y_n, \hat{y}_n) = \frac{1}{2} \log_2 \left( \frac{\eta_n^2 y_n^2 + D_n}{D_n} \right).
\end{equation}

To match the desired expression
\begin{equation}
I(y_n, \hat{y}_n) = \frac{1}{2} \log_2 \left( \frac{y_n^2}{D_n} \right),
\end{equation}
we require:
\begin{equation}
\frac{\eta_n^2 y_n^2 + D_n}{D_n} = \frac{y_n^2}{D_n}.
\end{equation}

Solving for $\eta_n$:
\begin{equation}
\eta_n^2 y_n^2 = y_n^2 - D_n \quad \Rightarrow \quad \eta_n^2 = 1 - \frac{D_n}{y_n^2} \quad \Rightarrow \quad \eta_n = \sqrt{1 - \frac{D_n}{y_n^2}}.
\end{equation}

This concludes the proof.
\end{proof}

In addition to the direct derivation, we present an alternative proof that leverages the duality between the rate-distortion function of a Gaussian source and the capacity of a Gaussian channel. This formulation provides further intuition into the role of the scaling factor $\eta_n$ in simulating the optimal Gaussian test channel.

\begin{proof}[Proof (from Gaussian test channel perspective)]

We consider the classic setting of a Gaussian source and a Gaussian test channel. Let the source $y_n \sim \mathcal{N}(0, \sigma^2)$ and the distortion constraint be $D_n$. According to rate-distortion theory, the minimum achievable rate under mean squared error (MSE) distortion is:
\begin{equation}
R(D_n) = \frac{1}{2} \log_2 \left( \frac{\sigma^2}{D_n} \right), \quad \text{for } D_n \leq \sigma^2.
\end{equation}

On the other hand, the capacity of a Gaussian channel with signal-to-noise ratio $\text{SNR} = \frac{P}{N}$ is given by:
\begin{equation}
C = \frac{1}{2} \log_2 \left( 1 + \frac{P}{N} \right),
\end{equation}
where $P$ is the power of the transmitted signal and $N$ is the noise variance.

In our setting, we simulate a test channel of the form:
\begin{equation}
\hat{y}_n = \eta_n y_n + z_n, \quad z_n \sim \mathcal{N}(0, D_n), \quad y_n \sim \mathcal{N}(0, \sigma^2),
\end{equation}
with $y_n$ and $z_n$ independent.

To ensure that this test channel mimics the optimal rate-distortion behavior, we must force the mutual information between $y_n$ and $\hat{y}_n$ to match the rate-distortion function:
\begin{equation}
I(y_n; \hat{y}_n) = R(D_n) = \frac{1}{2} \log_2 \left( \frac{\sigma^2}{D_n} \right).
\end{equation}

Now, note that the effective signal-to-noise ratio of the test channel is:
\begin{equation}
\mathrm{SNR}_{\text{eff}} = \frac{\mathrm{Var}[\eta_n y_n]}{\mathrm{Var}[z_n]} = \frac{\eta_n^2 \sigma^2}{D_n}.
\end{equation}
Thus, the mutual information is:
\begin{equation}
I(y_n; \hat{y}_n) = \frac{1}{2} \log_2 \left( 1 + \frac{\eta_n^2 \sigma^2}{D_n} \right).
\end{equation}

To make this equal to $R(D_n)$, we equate:
\begin{equation}
\frac{1}{2} \log_2 \left( 1 + \frac{\eta_n^2 \sigma^2}{D_n} \right) = \frac{1}{2} \log_2 \left( \frac{\sigma^2}{D_n} \right).
\end{equation}

This implies:
\begin{equation}
1 + \frac{\eta_n^2 \sigma^2}{D_n} = \frac{\sigma^2}{D_n} \quad \Rightarrow \quad \frac{\eta_n^2 \sigma^2}{D_n} = \frac{\sigma^2 - D_n}{D_n}.
\end{equation}

Solving for $\eta_n^2$:
\begin{equation}
\eta_n^2 = \frac{\sigma^2 - D_n}{\sigma^2} = 1 - \frac{D_n}{\sigma^2}.
\end{equation}

Therefore,
\begin{equation}
\eta_n = \sqrt{1 - \frac{D_n}{\sigma^2}},
\end{equation}
which is the desired result.

\end{proof}

\section{Effect of Source Correlation on Rate-Distortion Estimation}
\label{app:correlated_rate}

In Section~\ref{par:Theoretical}, we estimate the theoretical bitrate using the reverse water-filling principle, assuming that latent variables are independent Gaussian sources. However, in practical scenarios, residual correlation may exist between variables, which affects the accuracy of rate estimation. In this appendix, we analyze the impact of such correlation on the total rate-distortion cost.

\subsection{Problem Setup}

Consider two zero-mean Gaussian random variables $X$ and $Y$ with identical variance $\sigma^2$, and a fixed distortion level $D$ for each. The reconstructed variables $\hat{X}$ and $\hat{Y}$ are modeled using an additive Gaussian test channel:
\begin{align}
\hat{X} &= X + Z_X, \quad Z_X \sim \mathcal{N}(0, D), \\
\hat{Y} &= Y + Z_Y, \quad Z_Y \sim \mathcal{N}(0, D),
\end{align}
where $Z_X$ and $Z_Y$ are independent of $X$ and $Y$, respectively. The total rate is given by:
\begin{equation}
R_{\text{total}} = I(X; \hat{X}) + I(Y; \hat{Y}).
\end{equation}

We now compare $R_{\text{total}}$ under two different assumptions: (1) $X$ and $Y$ are independent, and (2) $X$ and $Y$ are correlated with Pearson correlation coefficient $\rho$.

\subsection{Case 1: Independent Variables}

If $X \perp Y$, then the mutual informations decompose as:
\begin{equation}
I(X; \hat{X}) = \frac{1}{2} \log_2 \left( 1 + \frac{\sigma^2}{D} \right), \quad
I(Y; \hat{Y}) = \frac{1}{2} \log_2 \left( 1 + \frac{\sigma^2}{D} \right).
\end{equation}
Hence, the total rate is:
\begin{equation}
R_{\text{ind}} = \log_2 \left( 1 + \frac{\sigma^2}{D} \right).
\end{equation}

\subsection{Case 2: Correlated Variables}

Suppose $X$ and $Y$ have correlation coefficient $\rho$, i.e.,
\begin{equation}
\mathbb{E}[XY] = \rho \sigma^2.
\end{equation}
The joint distribution of $(X,Y)$ is now correlated, and so are their reconstructions $(\hat{X}, \hat{Y})$.

Let us compute the mutual information between $(X,Y)$ and $(\hat{X},\hat{Y})$:
\begin{equation}
R_{\text{corr}} = I((X,Y); (\hat{X}, \hat{Y})).
\end{equation}

Since the system is jointly Gaussian, the mutual information is given by:
\begin{equation}
R_{\text{corr}} = \frac{1}{2} \log_2 \left( \frac{|\Sigma_{\hat{X}, \hat{Y}}|}{|\Sigma_{\hat{X}, \hat{Y} \mid X,Y}|} \right),
\end{equation}
where:
\begin{align}
\Sigma_{X,Y} &= \begin{bmatrix}
\sigma^2 & \rho \sigma^2 \\
\rho \sigma^2 & \sigma^2
\end{bmatrix}, \quad
\Sigma_{\hat{X}, \hat{Y}} = \Sigma_{X,Y} + D \cdot I_2, \\
\Sigma_{\hat{X}, \hat{Y} \mid X,Y} &= D \cdot I_2.
\end{align}

Then:
\begin{align}
|\Sigma_{\hat{X}, \hat{Y}}| &= \left| \begin{bmatrix}
\sigma^2 + D & \rho \sigma^2 \\
\rho \sigma^2 & \sigma^2 + D
\end{bmatrix} \right|
= (\sigma^2 + D)^2 - \rho^2 \sigma^4, \\
|\Sigma_{\hat{X}, \hat{Y} \mid X,Y}| &= D^2.
\end{align}

Therefore, the total mutual information is:
\begin{equation}
R_{\text{corr}} = \frac{1}{2} \log_2 \left( \frac{(\sigma^2 + D)^2 - \rho^2 \sigma^4}{D^2} \right).
\end{equation}

\subsection{Comparison and Interpretation}

Comparing the two rates:
\begin{align}
R_{\text{ind}} &= \log_2 \left( 1 + \frac{\sigma^2}{D} \right), \\
R_{\text{corr}} &= \frac{1}{2} \log_2  \frac{(\sigma^2 + D)^2 - \rho^2 \sigma^4}{D^2} .
\end{align}

Since $\rho^2 \sigma^4 > 0$ for any $\rho \ne 0$, we have:
\begin{equation}
R_{\text{corr}} < R_{\text{ind}}.
\end{equation}

This shows that treating correlated variables as independent overestimates the total rate. Thus, reverse water-filling performed under the independence assumption yields a conservative upper bound on the achievable bitrate.

\section{On the Gaussian Assumption in Theoretical Analysis}
\label{app:gaussianity}
In this section, we discuss the Gaussian assumption adopted in our theoretical framework. We first discuss the motivation for modeling latents as Gaussian variables, a common modeling choice adopted by both Hyperprior-based frameworks and our proposed approach. We then analyze how this assumption influences rate estimation in both Hyperprior-based and reverse water-filling formulations.  Finally, we provide a maximum-entropy perspective that formalizes the conservativeness of Gaussian-based estimates.

\subsection{Why Model Latents as Gaussian?}

Although the true marginal distribution $p^*(y)$ of the latent representation is generally unknown, 
there are both practical and theoretical motivations for adopting a Gaussian assumption in our framework.

\textbf{Analytical tractability.}  
Our analysis employs mean squared error (MSE) as the distortion metric, which naturally aligns with Gaussian test channels in classical rate-distortion theory. 
This choice enables closed-form expressions and makes it possible to apply the reverse water-filling theorem. 
While we primarily focus on MSE for its prevalence and analytical clarity, the framework is not inherently limited to MSE and can in principle be extended to other differentiable metrics.

\textbf{Maximum-entropy justification.}  
From an information-theoretic perspective, the Gaussian distribution maximizes entropy among all distributions with the same variance. 
Consequently, assuming Gaussian latents leads to conservative rate estimates that serve as valid upper bounds, 
ensuring the analysis remains meaningful even when the actual latent distribution deviates from Gaussian. We will provide a more detailed discussion later.

\textbf{Transform-domain perspective.}  
Classical signal processing provides additional support for Gaussian modeling. 
If natural images are approximated as Gaussian mixture models (GMMs), an ideal Karhunen--Lo\`eve Transform (KLT) decorrelates the components 
and maps them into uncorrelated Gaussian variables. 
However, the ideal KLT is data-dependent and difficult to realize in practice, especially for high-dimensional and non-stationary image distributions. Therefore, we adopt a neural network as a nonlinear and learnable transform to approximate the behavior of the optimal KLT. The latent representation produced by the encoder can be interpreted as a generalized KLT output, which can be reasonably interpreted as approximately Gaussian. 
Residual dependencies that remain after analysis transforms are explicitly handled by the context model, 
which captures local correlations through autoregressive mean prediction.


\subsection{Rate Estimation in the Hyperprior Framework}

In the Hyperprior model, each latent variable is encoded using an entropy model $q_\theta(y)$, typically a zero-mean Gaussian with learned variance. The estimated rate corresponds to the expected code length:
\begin{equation}
R_{\text{Hyperprior}} = \mathbb{E}_{p^*(y)}\left[-\log q_\theta(y)\right],
\end{equation}
which can be decomposed as:
\begin{equation}
R_{\text{Hyperprior}} = H(p^*) + D_{\mathrm{KL}}(p^* \parallel q_\theta),
\end{equation}
where $H(p^*)$ is the true entropy of the latent variable and $D_{\mathrm{KL}}$ is the Kullback-Leibler divergence between the true distribution and the assumed Gaussian model. This decomposition highlights that Hyperprior-based estimation intrinsically accounts for distribution mismatch via the KL term, yielding a pessimistic (upper-bound) rate.

\subsection{Rate Estimation via Reverse Water-Filling}

In contrast, our framework directly invokes the analytical rate-distortion function for Gaussian sources:
\begin{equation}
R_{\mathrm{RWF}} = \frac{1}{2} \log_2 \frac{\sigma_y^2}{D},
\end{equation}
where $\sigma_y^2$ denotes the second moment of the latent variable, and $D$ is the distortion allocation. This expression assumes that $y \sim \mathcal{N}(0, \sigma_y^2)$, and does not include a divergence term that reflects modeling error. As such, the accuracy of this rate depends on how well the true distribution aligns with the Gaussian assumption.

\subsection{Theoretical Justification via Maximum Entropy Principle}

According to the maximum entropy theorem~\cite[Theorem 8.6.5]{Cover2006}, among all distributions with a given variance, the Gaussian distribution achieves the highest differential entropy:
\begin{equation}
H(p^*) \leq H(\mathcal{N}(0, \sigma_y^2)) = \frac{1}{2} \log(2\pi e \sigma_y^2).
\end{equation}
Therefore, for any $p^*(y)$ with fixed second moment, the corresponding Gaussian-based rate estimation provides an upper bound:
\begin{equation}
R_{\mathrm{true}} \leq \frac{1}{2} \log_2  \frac{\sigma_y^2}{D} = R_{\mathrm{RWF}}.
\end{equation}
Similarly, in the Hyperprior framework, the presence of $D_{\mathrm{KL}}(p^* \parallel q_\theta)$ ensures that
\begin{equation}
R_{\mathrm{Hyperprior}} \geq H(p^*) \geq R_{\mathrm{true}}.
\end{equation}

In conclusion, the Gaussian assumption leads to worst-case (maximum entropy) rate estimates, providing a safety margin when the true distribution is unknown.
Overestimated rates reduce the risk of underprovisioning in channel or storage resource planning.
Both Hyperprior and our methods align with information-theoretic principles, ensuring that rate estimates remain valid even under distributional uncertainty.


\begin{table*}[!t]
\centering
\renewcommand{\arraystretch}{1}
\setlength{\tabcolsep}{1pt}
\small
\caption{Network architecture of the learned compression model and our framework. Each convolution is denoted as ``kernel size $k \times k$, $c$ channels, stride $s$''.}
\label{tab:architecture}
\begin{tabularx}{\textwidth}{@{} >{\centering\arraybackslash}m{0.33\textwidth}
                                 | >{\centering\arraybackslash}m{0.33\textwidth}
                                 | >{\centering\arraybackslash}m{0.33\textwidth} @{}}
\toprule
\textbf{Encoder} & \textbf{Decoder} & \textbf{Context Prediction} \\
\midrule
Conv: $5{\times}5$ c192 s2 & Deconv: $5{\times}5$ c$M$ s2 & Masked: $5{\times}5$ c$M$ s1 \\
GDN~\cite{balle2016density} & IGDN & \\
Conv: $5{\times}5$ c192 s2 & Deconv: $5{\times}5$ c192 s2 & \\
GDN & IGDN & \\
Conv: $5{\times}5$ c192 s2 & Deconv: $5{\times}5$ c192 s2 & \\
GDN & IGDN & \\
Conv: $5{\times}5$ c$M$ s2 & Deconv: $5{\times}5$ c3 s2 & \\
\midrule
\textbf{Hyper Encoder} & \textbf{Hyper Decoder} & \textbf{Entropy Parameters} \\
\midrule
Conv: $3{\times}3$ c192 s1 & Deconv: $5{\times}5$ c192 s2 & Conv: $1{\times}1$ c640 s1 \\
Leaky ReLU & Leaky ReLU & Leaky ReLU \\
Conv: $5{\times}5$ c192 s2 & Deconv: $5{\times}5$ c192 s2 & Conv: $1{\times}1$ c512 s1 \\
Leaky ReLU & Leaky ReLU & Leaky ReLU \\
Conv: $5{\times}5$ c192 s2 & Deconv: $3{\times}3$ c192 s1 & Conv: $1{\times}1$ c384 s1 \\
 & & \\
\bottomrule
\end{tabularx}
\vspace{-0.6em}
\end{table*}

\begin{table*}[!t]
\centering
\renewcommand{\arraystretch}{1.3}
\setlength{\tabcolsep}{8pt}
\small
\caption{The lightweight network architecture of our framework trained on MNIST. Each convolution is denoted as ``kernel size $k \times k$, $c$ channels, stride $s$''.}
\label{tab:architecture_mnist}
\begin{tabularx}{\textwidth}{@{}>{\centering\arraybackslash}X|>{\centering\arraybackslash}X@{}}
\toprule
\textbf{Encoder} & \textbf{Decoder} \\
\midrule
Conv: $5{\times}5$ c36 s1  & Deconv: $5{\times}5$ c36 s2 \\
GDN                        & IGDN \\
Conv: $5{\times}5$ c36 s2  & Deconv: $5{\times}5$ c36 s2 \\
GDN                        & IGDN \\
Conv: $5{\times}5$ c36 s2  & Deconv: $5{\times}5$ c36 s2 \\
GDN                        & IGDN \\
Conv: $5{\times}5$ c36 s2  & Deconv: $5{\times}5$ c1 s1 \\
\bottomrule
\end{tabularx}
\vspace{-0.6em}
\end{table*}

\section{Model Implementation Details}
\label{app:architecture}
The detailed architectures of each sub-network in our model are summarized in Table~\ref{tab:architecture}. While our theoretical framework is agnostic to specific network designs, we adopt a unified architecture for fair comparison, closely following the configurations used in the Hyperprior~\cite{balle2018variational} model and Minnen \emph{et al.}~\cite{minnen2018joint}.

For experiments conducted on the MNIST dataset, we further introduce a lightweight variant with substantially fewer parameters to prevent overfitting due to the limited data size. The architecture of this compact model is provided in Table~\ref{tab:architecture_mnist}.

To accommodate different compression regimes, we set the bottleneck channel capacity $M$ based on the target bitrate. Specifically, we use $M{=}192$ for low-rate training scenarios (i.e., $\text{bpp} < 0.8$), and increase to $M{=}320$ for high-rate regimes to ensure sufficient representational capacity.

\section{Qualitative Results on Kodak}
\label{app:qualitative}

We visualize five images from the Kodak dataset, each compressed using three different methods: (1) Hyperprior baseline, (2) Joint optimal variance and quantization, and (3) Joint optimization with context prediction. Below each image, we report the corresponding PSNR (in dB) and bitrate (in bits per pixel).

\begin{figure*}[!t]
\centering
\setlength{\tabcolsep}{2pt}
\renewcommand{\arraystretch}{1}

\begin{tabular}{ccc}
\includegraphics[width=0.31\textwidth]{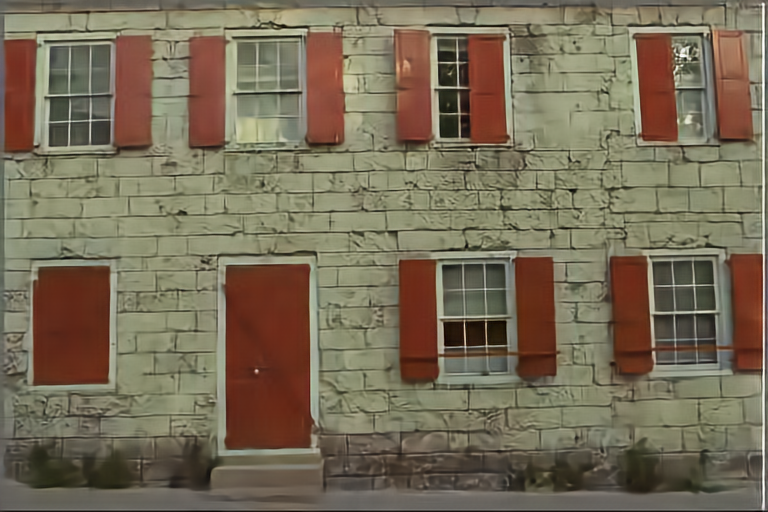} &
\includegraphics[width=0.31\textwidth]{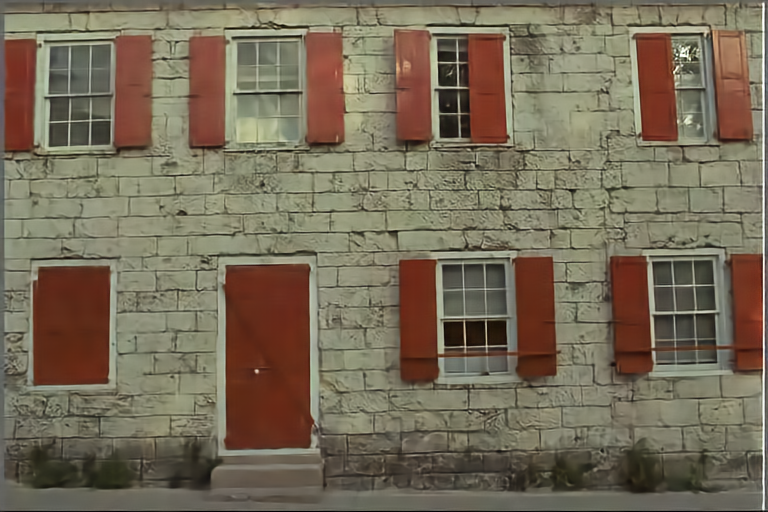} &
\includegraphics[width=0.31\textwidth]{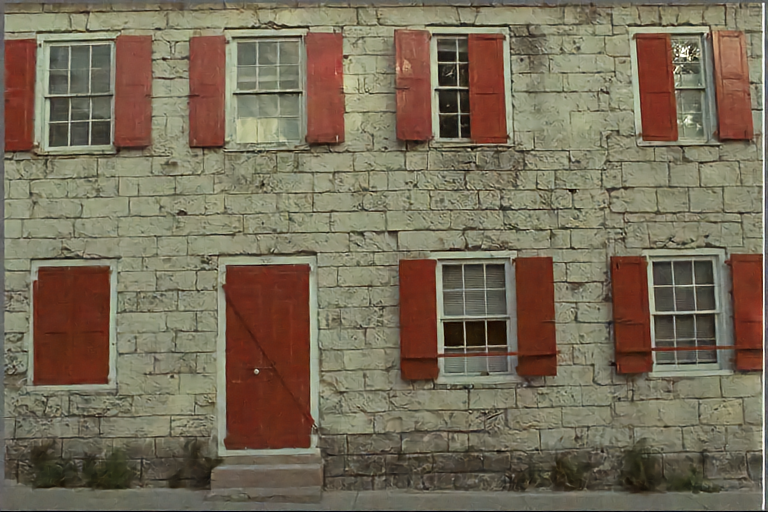} \\
\footnotesize Hyperprior: PSNR=26.1595, bpp=0.2675 &
\footnotesize Ours (w/o context): PSNR=27.7611, bpp=0.22425 &
\footnotesize Ours (w/ context): PSNR=29.0142, bpp=0.1863 \\
\end{tabular}

\begin{tabular}{ccc}
\includegraphics[width=0.31\textwidth]{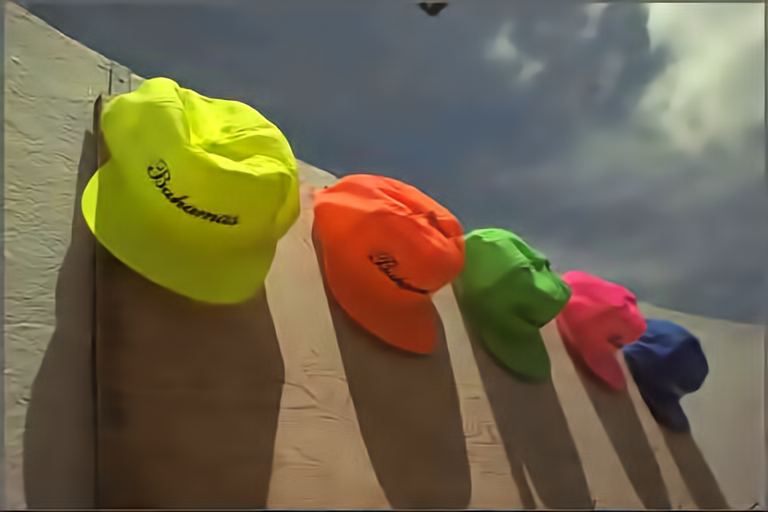} &
\includegraphics[width=0.31\textwidth]{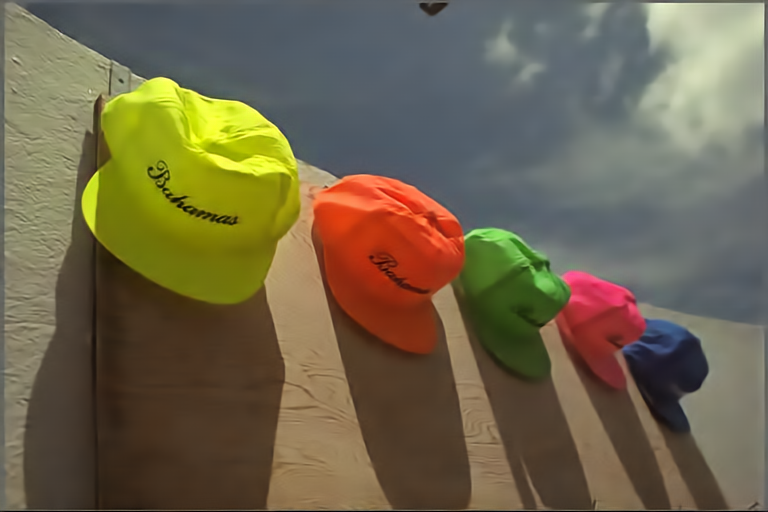} &
\includegraphics[width=0.31\textwidth]{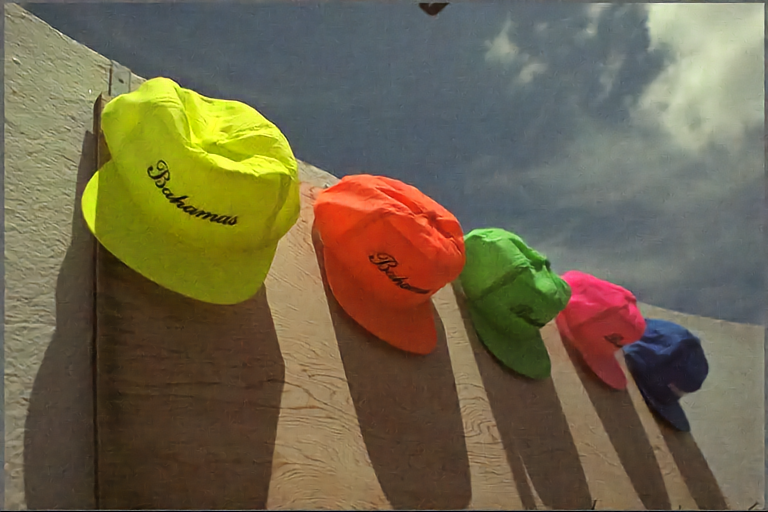} \\
\footnotesize Hyperprior: PSNR=31.0668, bpp=0.10352 &
\footnotesize Ours (w/o context): PSNR=32.6676, bpp=0.08424 &
\footnotesize Ours (w/ context): PSNR=33.9723, bpp=0.05395 \\
\end{tabular}

\begin{tabular}{ccc}
\includegraphics[width=0.31\textwidth]{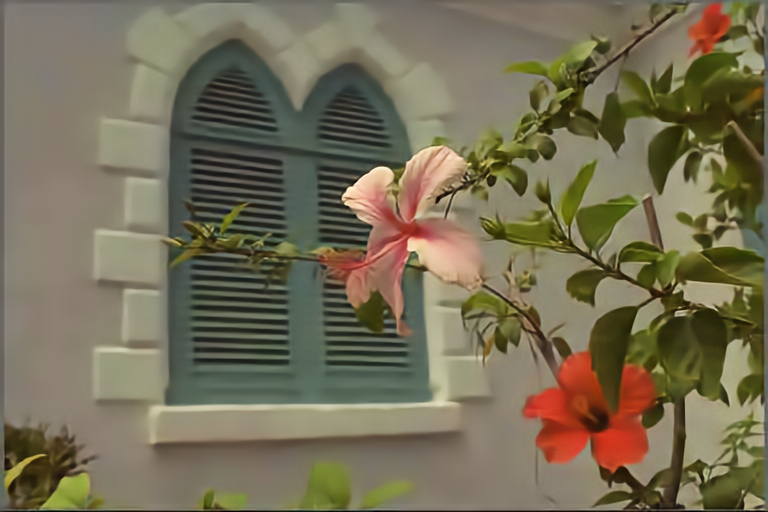} &
\includegraphics[width=0.31\textwidth]{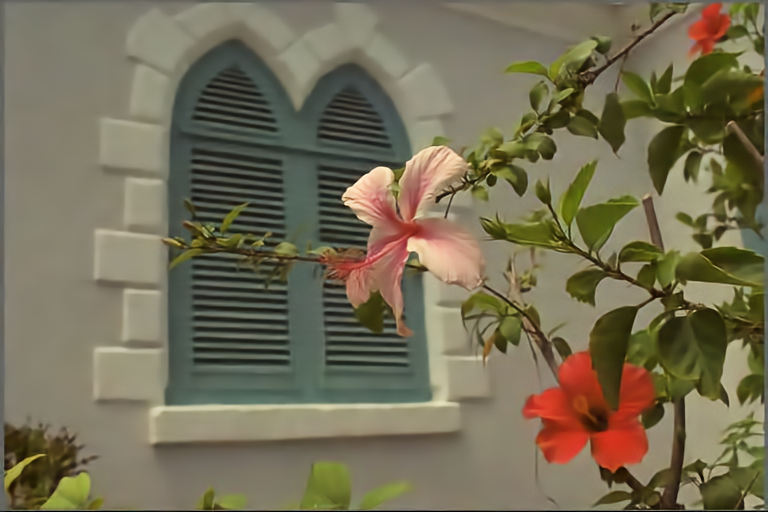} &
\includegraphics[width=0.31\textwidth]{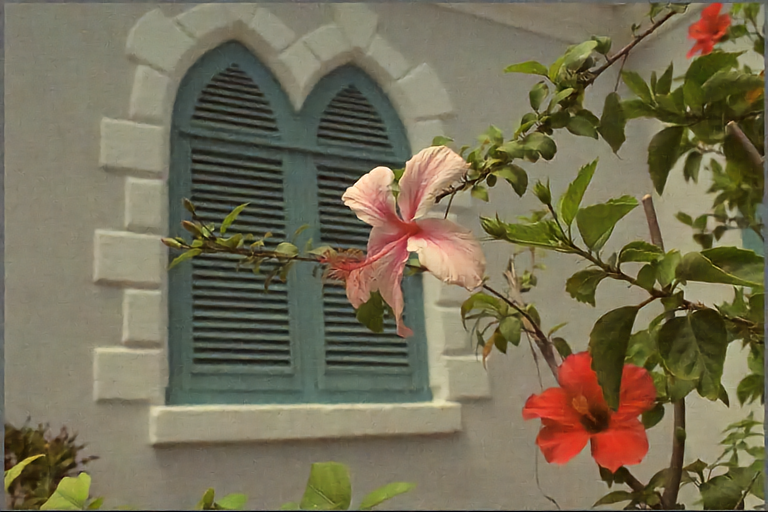} \\
\footnotesize Hyperprior: PSNR=30.1220, bpp=0.1587 &
\footnotesize Ours (w/o context): PSNR=31.9423, bpp=0.1231 &
\footnotesize Ours (w/ context): PSNR=33.2217, bpp=0.0818 \\
\end{tabular}

\begin{tabular}{ccc}
\includegraphics[width=0.31\textwidth]{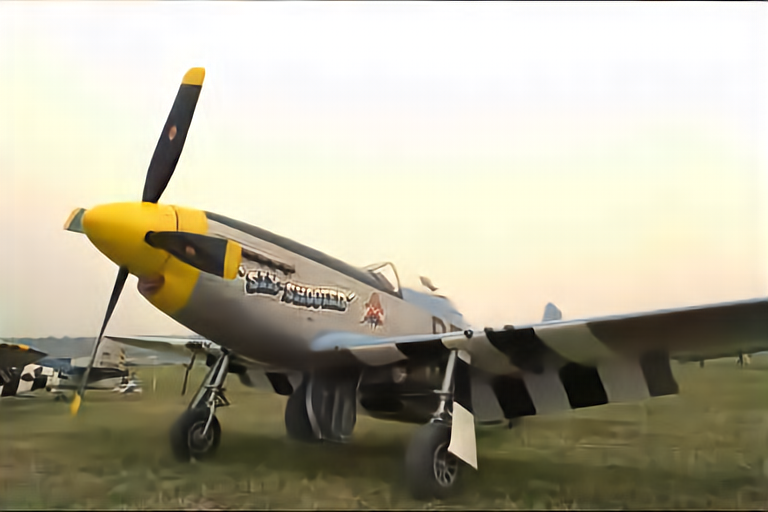} &
\includegraphics[width=0.31\textwidth]{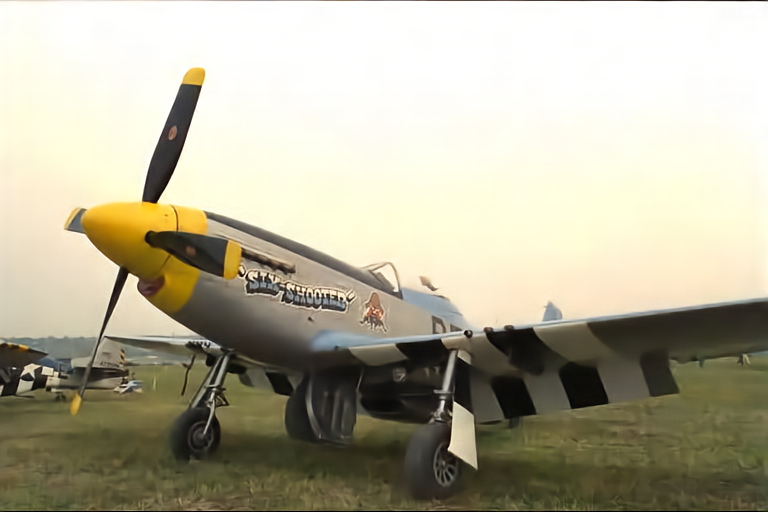} &
\includegraphics[width=0.31\textwidth]{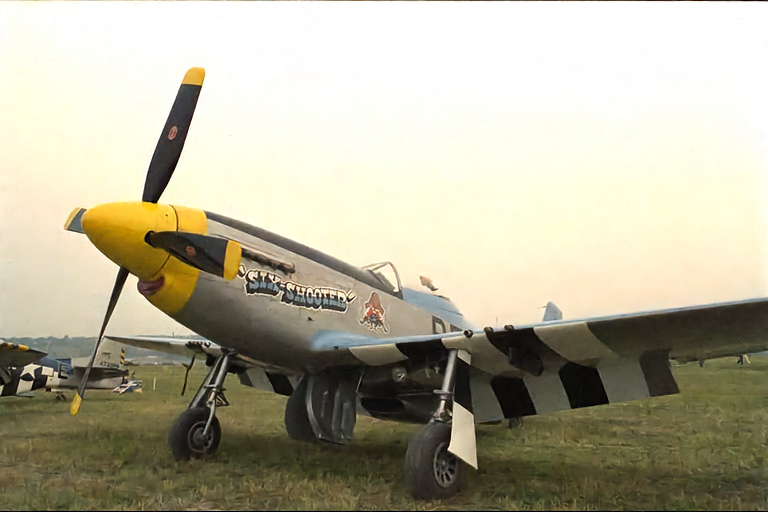} \\
\footnotesize Hyperprior: PSNR=30.3555, bpp=0.12093 &
\footnotesize Ours (w/o ctx): PSNR=31.8410, bpp=0.10125 &
\footnotesize Ours (w/ ctx): PSNR=33.1759, bpp=0.06525 \\
\end{tabular}

\begin{tabular}{ccc}
\includegraphics[width=0.31\textwidth]{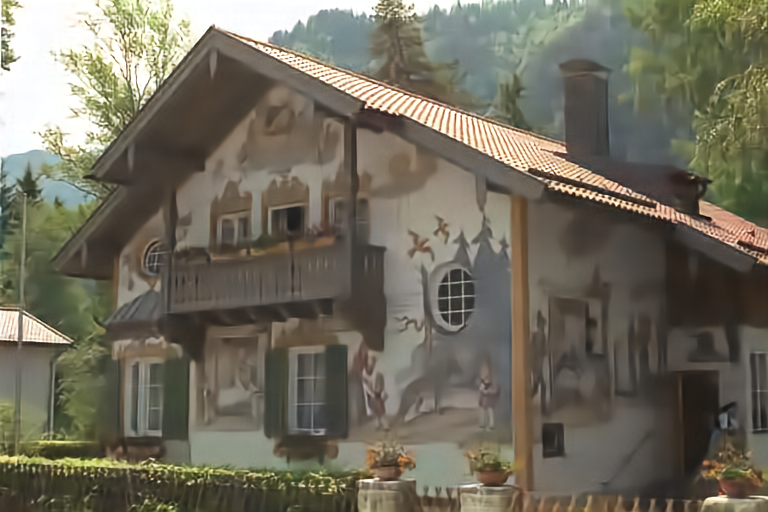} &
\includegraphics[width=0.31\textwidth]{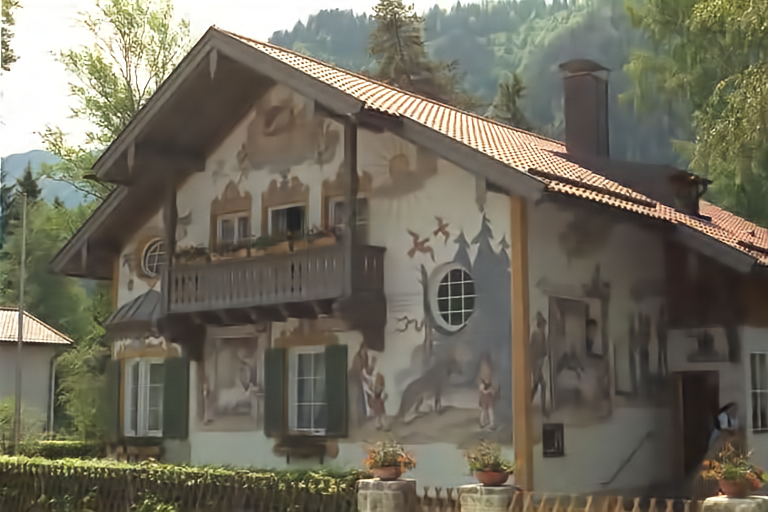} &
\includegraphics[width=0.31\textwidth]{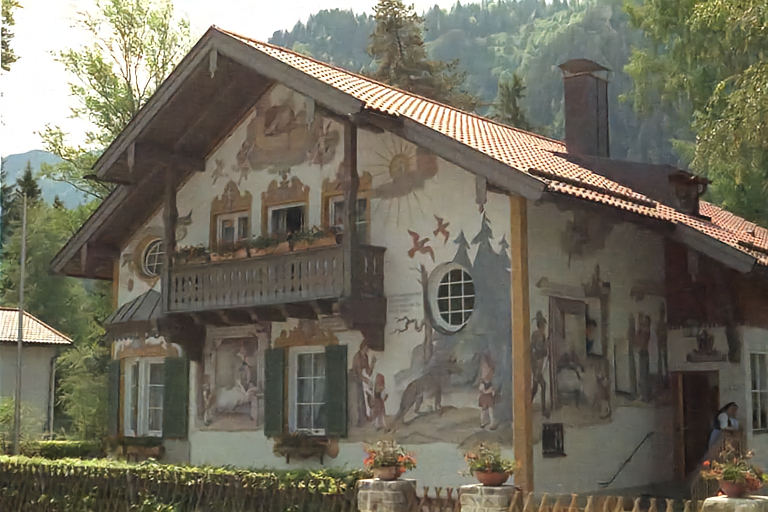} \\
\footnotesize Hyperprior: PSNR=26.1595, bpp=0.22485 &
\footnotesize Ours (w/o ctx): PSNR=27.8055, bpp=0.18445 &
\footnotesize Ours (w/ ctx): PSNR=29.3552, bpp=0.16528 \\
\end{tabular}

\caption{Qualitative results on Kodak images using three methods (Hyperprior, ours without context, ours with context).}
\label{fig:kodak_three_cases}
\end{figure*}

\end{document}